\newcommand\arraybslash{\let\\\@arraycr}
\newcolumntype{+}{>{\global\let\currentrowstyle\relax}}
\newcolumntype{^}{>{\currentrowstyle}}
\newlength{\bracewidth}
\def\fudge{\mathchoice{}{}{\mkern.5mu}{\mkern.8mu}}
\def\bbc#1#2{{\rm \mkern#2mu\vbar\mkern-#2mu#1}}
\def\bbb#1{{\rm I\mkern-3.5mu #1}}
\def\bba#1#2{{\rm #1\mkern-#2mu\fudge #1}}
\def\bb#1{{\count4=`#1 \advance\count4by-64 \ifcase\count4\or\bba A{11.5}\or
   \bbb B\or\bbc C{5}\or\bbb D\or\bbb E\or\bbb F \or\bbc G{5}\or\bbb H\or
   \bbb I\or\bbc J{3}\or\bbb K\or\bbb L \or\bbb M\or\bbb N\or\bbc O{5} \or
   \bbb P\or\bbc Q{5}\or\bbb R\or\bbc S{4.2}\or\bba T{10.5}\or\bbc U{5}\or
   \bba V{12}\or\bba W{16.5}\or\bba X{11}\or\bba Y{11.7}\or\bba Z{7.5}\fi}}
\newtheorem{theorem}{Theorem}[section]
\newtheorem{remark}{Remark}[section]
\newtheorem{lemma}{Lemma}[section]
\title{Vector-borne diseases models with residence times - a Lagrangian perspective}
\author{Derdei Bichara$^{a, b}$ and Carlos Castillo-Chavez$^{c}$ \\
$^{a}$ Department of Mathematics, California State University, Fullerton,\\
$^{b}$ Center for Computational and Applied Mathematics, \\
800 N. State College Blvd, Fullerton, CA 92831.\\
$^{c}$ Simon A. Levin  Mathematical, Computational and Modeling Science Center, \\
Arizona State University, Tempe, AZ 85287\\
E-mails: dbichara@fullerton.edu \& ccchavez@asu.edu\\
}
\date{ }
\begin{document}

\maketitle
\abstract{A multi-patch  and multi-group modeling framework describing the dynamics of a class of diseases driven by the interactions between vectors and  hosts structured by groups is formulated. Hosts' dispersal  is modeled in terms of patch-residence times with the nonlinear dynamics taking into account the \textit{effective} patch-host size.  
The residence times basic reproduction number $\mathcal R_0$ is computed and shown to depend on the relative environmental risk of infection. The model is robust, that is, the disease free equilibrium is globally asymptotically stable (GAS) if $\mathcal R_0\leq1$ and a unique interior endemic equilibrium is shown to exist that is GAS whenever $\mathcal R_0>1$ whenever the configuration of host-vector interactions is irreducible. The effects of \textit{patchiness} and \textit{groupness}, a measure of host-vector heterogeneous structure, on the basic reproduction number $\mathcal R_0$, are explored. Numerical simulations are carried out to highlight the effects of residence times on disease prevalence.}\\

\noindent{\bf Mathematics Subject Classification:} 92D30, 34D23, 34A34, 34C12 .

\paragraph{\bf Keywords:}
Vector-borne diseases, Human dispersal, Basic reproduction number, Global stability, Nonlinear dynamical systems.
\section*{Introduction}
Vector-borne diseases, a major public health problem around the world, are responsible for over one million death and hundreds of millions cases each year \cite{world2014global,tolle2009mosquito} and so diminishing their impact is a worldwide priority.  Travel, climate change and trade have significantly altered vector-borne diseases dynamics \cite{castillo2015beyond,elbers2015mosquitoes,kuno1995review,perrings2014merging,reiter2014climate}. Sir Ronald Ross \cite{Ross1911} was the first to model a vector borne disease dynamics. Ross's paper  \cite{Ross1911} and follow up work \cite{Ross1916,RossHud1917,RossHuds17II} laid the foundation of what is known today as the field of mathematical or theoretical epidemiology. There is an extensive   literature associated with the  the study of vector-host interactions in the context of human diseases  (\cite{Arino_Bow07,aron1982population,bailey1982biomathematics,Chit_08,MR2272613,Chit_Hym_Cus_08,Dietz75,MR83d:92066,MR83c:92058,dumont2010vector,dumont2008temporal,hasibeder1988population,macdonald1956epidemiological,macdonald1957epidemiology,NgwaBMB06,NgwaMCM00} and the references therein). 
Sparse theoretical results exist on the role of geographical heterogeneity on the spread of vector-borne diseases, mostly via 
metapopulation models \cite{adams2009man,arino2012metapopulation,MBS6900,cosner2009effects,gao2012multipatch,torres1997models,rodriguez2001models,smith2004risk,xiao2014transmission}, that assume that the movement of host is ``permanent"; this approach has been referred as Eulerian \cite{grunbaum1994translating,okubo2013diffusion,okubo1980diffusion}. A Lagrangian perspective considers the movement of individuals across patches in a framework where the hosts' origin or identity are never lost. This approach, useful in the study of the role of movement of individuals in highly connected settings albeit it has received limited attention \cite{cosner2009effects,dye1986population,iggidr2014dynamics,rodriguez2001models,Ruktanonchai2016}.

The concept of Langragian and Eulerian approaches were implemented by Okubo et \textit{al.} \cite{okubo2013diffusion,okubo1980diffusion} in modeling the diffusion and aggregation of animal populations in ecology. This nomenclature has been used in the context of epidemic models by Cosner et \textit{al.} \cite{cosner2009effects}. The use of a Lagrangian approach in the study of the dynamics and control of vector-borne diseases has also been explored in \cite{dye1986population,hasibeder1988population} prior this work. Specifically, Dye and Hasibeder  \cite{dye1986population,hasibeder1988population} considered the study of vector-born dynamics via $SIS-SI$ type host-vector models in the context of $n$ patch systems. Rodriguez and Torres-Sorando \cite{rodriguez2001models} used a Lagrangian perspective via the  incorporation of short-time visitations to multiple patches, also in the context of vector borne disease. In \cite{Ruktanonchai2016}, authors also considered a patchy Ross-Macdonald model and derived patch specific basic reproduction number in order to identify which patch is a source or a sink. 
More recently, Iggidr et \textit{al.} \cite{iggidr2014dynamics}  introduced a general $SIR-SI$ multi group deriving necessary and sufficient conditions for the existence of a sharp threshold \cite{iggidr2014dynamics}. Their \cite{iggidr2014dynamics} abstract setting  did not incorporate residence times explicitly, albeit their general infection terms technically may allow for their inclusion. The study in  \cite{iggidr2014dynamics} and related papers, with the exception of \cite{hasibeder1988population,dye1986population}, assume that  hosts and vectors are residents or members of particular patch or group. Our framework can handle multiple levels of organization including the host's age or socio-economic structure (see \cite{martens2000malaria,thiboutot2010chikungunya} for the age factors and \cite{biritwum2000incidence,koram1995socio,onwujekwe2008improving} for the socio-economics' role). Since vector transmission is often determined by the vectors' place of residence, it is often useful to decouple the host's structure from that of vectors' population whenever possible.

In this paper, we consider a vector-host model where the host population is structured by groups/classes that interact with non-mobile vectors living in multiple patches/environments. The hosts' groups may be defined by socioeconomic background, gender, or age. The vectors' patches represent the vectors' ``space", which include schools, farms, workplaces etc. Hosts, in general, will distribute their time in a multitude of vectors' places of residence (patches).
In our setup, we assume that the spatial scale under consideration is such that ignoring vector mobility across patches is acceptable.  There are evidences that such an assumption is reasonable, for example, Dengue and Chikungunya's urban vectors \textit{Aedes aegypti} rarely travel more than a few tens of meters during their lifespan \cite{adams2009man,WHODengueMosquitoes}; the mainly rural but urban adapted vector \textit{Aedes albopictus} have maximum dispersal of 400-600 meters \cite{honorio2003dispersal,niebylski1994dispersal}; according to \cite{bonnet1946dispersal,niebylski1994dispersal}, the vectors \textit{Aedes albopictus} are unlikely to travel long distance due to wind speed variability, in fact, they exhibit a tendency to fly closer to the ground, desisting  to fly during heavy winds; the adult \textit{Anopheles}( vector of malaria) does not fly more than 2 kilometers \cite{sturchler2001vector}; and, \textit{Anopheles gambiae}'s (the main malaria vector in Africa) maximal flight distance is 10 kilometer \cite{kaufmann2004flight}.  In short, the spread of vector-borne diseases, in many instances, is primarily due to hosts' dispersal. Therefore, it is assumed here as in \cite{MBS6900,xiao2014transmission} that vectors do not abandon their geographical environment or patch. There are alternative modes of mosquitoes dispersal like those generated by trade, including the used-tires' trade \cite{novak1995north,roiz2007survey}.

The host population is structured into $n$ groups with dispersal modeled via the residence times matrix $\mathbb P=(p_{ij})_{\substack{
      1 \leq i \leq n, \\
      1 \leq j \leq m}}$, where $p_{ij}$ denotes the proportion of time that a host member of Groups $i$ spends in Patch $j$.  The use of this approach impacts the temporal dynamics of the \textit{effective} host population size in each patch. Host \textit{effective} population size \textit{per} patch, that is the number of hosts of each group  at time $t$ in Patch $j$, $j=1,2,\dots,m$; is computed using the entries of the matrix $\mathbb P$ as weights. The density of \textit{effective} infected host per patch account for both \textit{effective} population and \textit{effective} infected population size  in each patch.

      The host \textit{effective} population size has not been incorporated in the literature using a Lagrangian approach in the contet of vector-borne diseases before \cite{cosner2009effects,rodriguez2001models} (but see \cite{castillo2003epidemic}). Our formulation generalize the case where vectors and hosts are defined  by jointly inhabited patches \cite{cosner2009effects,iggidr2014dynamics,rodriguez2001models}. We prove that the disease free equilibrium is GAS if $\mathcal R_0\leq1$ and that a unique endemic equilibrium exists and is GAS if $\mathcal R_0>1$ whenever the multi-patch, multi-group system is irreducible. This approach has been used in the study of a general $SIS$ model in the context of communicable diseases (\cite{bichara2015sis}).

The paper is organized as follow. Section 1 is devoted to the derivation and basic properties of the model; Section \ref{sec:R0GAS} deals with the stability analysis of the disease free equilibrium (DFE) and the endemic equilibrium. Section \ref{sec:HeteroR0}, highlights the role of heterogeneity in term of patch and group variability on the basic reproduction number; Section \ref{sec:Simulations} highlights tour results in the context of 2 groups, 2 patches and 2 groups and 3 patches via  simulations. Section \ref{sec:Conclusion} collects our conclusions and thoughts on the usefulness of this approach and list possible extensions. 
\section{Derivation of the model}\label{sec:Derivation}

We consider the dynamics of  human-vector interactions within a population composed of $n$ social groups and $m$ environments or patches. We denote by $N_{h,i}$  the host population in social group $i$, $i=1,\dots,i$, and $N_{v,j}$ vector population in Patch $j$, $j=1,\dots,m$. The susceptible and infected host populations in group $i$, $i=1,\dots,n$ , at time $t$, are denoted by $S_{h,i}(t)$ and $I_{h,i}(t)$ respectively.  It is assumed that the total host population in each group is constant, that is $N_{h,i}=S_{h,i}(t)+I_{h,i}(t)$; that the disease in the host is captured by an $SIS$ epidemic model while the vectors' dynamics follows an $SI$ framework. The vector population in each patch is composed by $S_{v,j}$ and $I_{v,j}$, the susceptible and infected vector populations in Patch $j$, $j=1,\dots,m$, respectively.  

The entries of the residence times matrix $\mathbb P$ denote the proportion of time that individuals of different groups spend in each patches; specifically $p_{ij}$ represents the proportion of  time that members of group $i$ spend in Patch $j$ ($p_{ij}\geq0$ for all $j$ and $\sum_{j=1}^mp_{ij}=1$ for all $i$). The susceptible individuals of group $i$ ($S_{h,i}$) are generated through birth at the per-capita rate $\mu_i$ and they recover from infection at the per-capita rate $\gamma_i$. It is assumed that all offsprings are susceptible and that the disease does not confer immunity. The birth of susceptible individuals in group $i$ is compensated by deaths, maintaining constant host population size in each group. The host population is at risk of infection in every patches from its interaction with local infected vectors ($I_{v,j},\; j=1,\dots, m$). Hence, the dynamics of the the susceptible host of group $i$, for $i=1,\dots,n$, is given by:
$$
\dot S_{h,i}=\mu_iN_{h,i}+\gamma_iI_{h,i}-\sum_{j=1}^{m}b_j(N_h,N_{v,j}) \beta_{v,h}p_{ij}S_{h,i}\frac{I_{v,j}}{N_{v,j}}
-\mu_i S_{h,i}$$ 
where $b_j(N_h,N_{v,j}) $ is the number of mosquito bites per human per unit of time \cite{CCCContacts,MR2272613,forouzannia2014mathematical,garba2008backward}  in Patch $j$. $b_j(N_h,N_{v,j})$ is assumed to be a function of the number of host in Patch $j$; a population that includes visitors from other patches.

The dynamics of infected hosts of group $i$, $i=1,\dots,n$, is  modeled as follows

\begin{equation}\label{EqIh}
\dot I_{h,i}=\sum_{j=1}^{m}b_j(N_h,N_{v,j}) \beta_{vh}p_{ij}S_{h,i}\frac{I_{v,j}}{N_{v,j}}
-(\mu_i+\gamma_i) I_{h,i}
\end{equation}

The susceptible vectors in Patch $j$ are replenished via constant recruitment $\Lambda_{v,j}$, subject to death at the per-capita rate $\mu_v$  and removed (through harvesting and spraying) at the per-capita rate $\delta_j$. We suppose that the natural per-capita vectors' death rates are the same in all patches. Though, the vectors do not move across patches, the susceptible mosquitoes in Patch $j$ ($S_{v,j}$) may, of course, be infected by infected hosts of any group while visiting Patch $j$. The \textit{effective} proportion of infected individuals in Patch $j$ is therefore given by

$$\frac{ \sum_{i=1}^{n}p_{ij}I_{h,i}}{ \sum_{k=1}^{n}p_{kj}N_{h,k}}$$
Hence, the dynamics of susceptible vector in Patch $j$, $j=1,\dots,m$ in patch $j$ at time $t$ is given by

$$\dot S_{v,j}=\Lambda_{v,i}-a_j\beta_{hv}S_{v,j}\frac{ \sum_{i=1}^{n}p_{ij}I_{h,i}}{ \sum_{k=1}^{n}p_{kj}N_{h,k}}-(\mu_v+\delta_j)S_{v,j}$$ 
 
where $a_j$ is the number of bites per mosquito per unit of time in Patch $j$, assumed to be constant.

The dynamics of infected vectors in Patch $j$ is given by 

\begin{equation}\label{EqIv}
\dot I_{v,j}=a_j\beta_{hv}S_{v,j}\frac{ \sum_{i=1}^{n}p_{ij}I_{h,i}}{ \sum_{k=1}^{n}p_{kj}N_{h,k}}-(\mu_v+\delta_j)I_{v,j}
\end{equation} 

We know that the total number of bites by mosquitoes ($a_jN_{v,j}$ in Patch $j$) should equal the total number of bites received by humans ($b_j(N_h,N_v)\sum_{k=1}^np_{kj}N_{h,k}$) \cite{Arino_Bow07,CCCContacts,NgwaMCM00}. In our case, this conservation of contact rates should be satisfied in each patch. Hence, for Patch $j$, we have:
$$a_jN_{v,j}=b_j(N_h,N_{v,j})\sum_{k=1}^np_{kj}N_{h,k}.$$
This implies that:
\begin{equation}\label{EqCons}b_j(N_h,N_{v,j})=\frac{a_jN_{v,j}}{\sum_{k=1}^np_{kj}N_{h,k}}
\end{equation}
Hence, the disease dynamics for $n$ host groups interacting in $m$ different environments subjected to resident vectors is completely described by the following system:
\begin{equation} \label{PatchGenComplete}
\left\{\begin{array}{llll}
\dot S_{h,i}=\mu_iN_{h,i}+\gamma_iI_{h,i}-\beta_{vh}S_{h,i}\sum_{j=1}^{m} a_jp_{ij}\frac{I_{v,j}}{\sum_{k=1}^n p_{kj}N_{h,k}}
-\mu_i S_{h,i}\\\\
\dot I_{h,i}=\beta_{vh}S_{h,i}\sum_{j=1}^{m} a_jp_{ij}\frac{I_{v,j}}{\sum_{k=1}^n p_{kj}N_{h,k}}-(\mu_i+\gamma_i) I_{h,i}\\\\
\dot S_{v,j}=\Lambda_{v,j}-a_{j}\beta_{hv}S_{v,j}\frac{ \sum_{i=1}^{n}p_{ij}I_{h,i}}{ \sum_{k=1}^{n}p_{kj}N_{h,k}}-(\mu_v+\delta_j)S_{v,j}\\\\
\dot{I}_{v,j}=\beta_{hv}S_{v,j}\frac{ \sum_{{i}=1}^{n}p_{ij}I_{h,i}}{ \sum_{k=1}^{n}p_{kj}N_{h,k}}-(\mu_v+\delta_{j})I_{v,j},
\end{array}\right.
\end{equation}

with $i=1,\dots, n$ and $j=1,\dots, m$. The parameters used in Model (\ref{PatchGenComplete}) are defined in Table \ref{tab:Param} and the flow diagram of the model is provided in Fig \ref{fig:Flow}.

\begin{figure}[ht!]
\begin{center}
\includegraphics[scale=1]{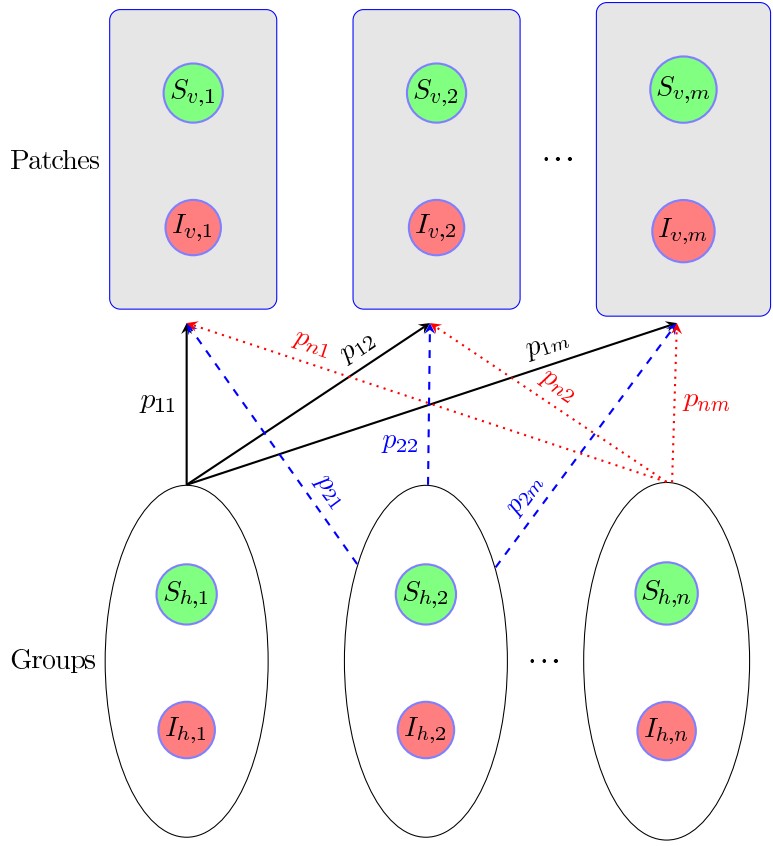}
\caption{Flow diagram of the model}
\label{fig:Flow} 
\end{center}
\end{figure}

\begin{table}[h!]
  \begin{center}
    \caption{Description of the parameters used in System (\ref{PatchGenComplete}).}
    \label{tab:Param}
    \begin{tabular}{cc}
      \toprule
      Parameters & Description \\
      \midrule
$\beta_{vh}$ & Infectiousness of human to mosquitoes per biting\\
$\beta_{hv}$ & Infectiousness of mosquitoes to humans per biting\\
$a_{j}$ &  Biting rate in Patch $j$ \\
$\mu_i$ & \textit{Per capita} humans' birth and death rate for Group $i$\\
$\gamma_i$  & \textit{Per capita} Recovery rate  for Group $i$\\
$p_{ij}$ &  Proportion of time individual in Group $i$ spend in Patch $j$\\
$\mu_{v}$  & \textit{Per capita} natural death rate of mosquitoes \\
$\delta_{j}$  & \textit{Per capita} death rate of control of mosquitoes in Patch $j$.\\
      \bottomrule
    \end{tabular}
  \end{center}
\end{table}

The total host population is constant and since total vector population dynamics are given by $$\dot N_{v,j}=\Lambda_{v,j}-(\mu_v+\delta_{j})N_{v,j},$$
we can deduce that $$\limsup_{t\to+\infty}N_{v,j}=\frac{\Lambda_{v,j}}{\mu_v+\delta_{j}}:= \bar N_{v,j} $$
And so, the vector population is asymptotically constant in each patch. The use of the asymptotic theory on triangular systems \cite{CasThi95,0478.93044}, applied to System (\ref{PatchGenComplete}), leads to the following equivalent autonomous system:
\begin{equation} \label{PatchGenF}
\left\{\begin{array}{llll}
\dot I_{h,i}=\beta_{vh}(N_{h,i}-I_{h,i})\sum_{j=1}^{m} a_jp_{ij}\frac{I_{v,j}}{\sum_{k=1}^n p_{kj}N_{h,k}}-(\mu_i+\gamma_i) I_{h,i}, \quad \forall i=1,2.\dots,n.\\\\
\dot{I}_{v,j}=a_j\beta_{hv}(\bar N_{v,j}-I_{v,j})\frac{ \sum_{i=1}^{n}p_{ij}I_{h,i}}{ \sum_{k=1}^{n}p_{kj}N_{h,k}}-(\mu_v+\delta_j)I_{v,j}, \quad \forall j=1,2,\dots,m.
\end{array}\right.
\end{equation}
Here, we have that human risk in Patch $j$ is defined by $a_j\beta_{vh}$ and so, Patch $j$ is riskier than Patch $l$ whenever $a_j>a_l$. The modeling framework is quite flexible. For example, the case $n=m$ covers the interactions between $n$ host groups (or classes) in  $n$ patches or the case when hosts and vectors are co-residents. The case $p_{ij}=0$ for all $j\neq i$ or $p_{ii}=1$ when $n=m$ leads to a collection of isolated classical Ross-Macdonal models. For $n\neq m$, hosts are structured in groups like children, farmers, retired people and vectors are distributed in patches or meeting places like home, school, farm, work place or mosquito breeding site, etc.  People from the considered groups visit patches and spend certain amount of time and get possibly infected whereby.

Dye and Hasibeder  \cite{dye1986population,hasibeder1988population} models involve $n$ host and $m$ vector patches under the assumption that only the vectors move.  The models in \cite{cosner2009effects,rodriguez2001models,Ruktanonchai2016},  have incorporated residence times explicitly but  
 their modeling does not account for  the \textit{effective} patch population size. In fact, in\cite{cosner2009effects,rodriguez2001models}, the pattern of movement between patches does not produce any ``net" change on the total population per patch at any given time. For example, it is assumed (in  \cite{rodriguez2001models}), that the total population of patch $j$ is $N/k$ where $N$ the overall human population and $k$ the number of patches. In \cite{cosner2009effects,Ruktanonchai2016}, the total population in each patch $j$ is $N_j$ (or $H_j$ in their notations) regardless of the movement of individuals between patches. Similar remarks hold for the Dengue's two-patch model in \cite{lee2015role}. In our case,  the host population in each patch is the sum of  visiting individuals of different groups \textit{weighted by the proportion of time} they spend in each patch. This means that, at time $t$, at any given Patch $j$, the host population is $p_{1j}N_{h,1}+p_{2j}N_{h,2}+\dots+p_{nj}N_{h,n}$, the \textit{effective} population size of Patch $j$. Moreover, this approach is well suited for better intervention strategies through the knowledge of $a_j$, $j=1,\dots,m$ or the residence time matrix $\mathbb P=(p_{ij})_{\substack{
      1 \leq i \leq n, \\
      1 \leq j \leq m}}$. For example, if a particular host group is more affected by the disease in consideration, that may lead to the patch within which the infection had occurred, that is the patch ``source" of infection. That could help steer control measures such as DDT in the ``infectious" patch or social distancing the ``infected" group to mitigate the disease burden.

System (\ref{PatchGenF}), could be written in a compact form as follows
\begin{equation} \label{PatchMatrices}
\left\{\begin{array}{llll}
\dot I_{h}=\beta_{vh}\textrm{diag}(N_{h}-I_{h})\mathbb P\textrm{diag}(a)\textrm{diag}(\mathbb P^tN_h)^{-1}I_v-\textrm{diag}(\mu+\gamma) I_{h}\\\\
\dot{I}_{v}=\beta_{hv}\textrm{diag}(a)\textrm{diag}(N_{v}-I_{v})\textrm{diag}(\mathbb P^tN_h)^{-1}\mathbb P^tI_h-\textrm{diag}(\mu_v+\delta)I_{v}
\end{array}\right. 
\end{equation}
where $I_{h}=[I_{h,1},I_{h,2},\dots,I_{h,n}]^t$, $I_{v}=[I_{v,1},I_{v,2},\dots,I_{v,m}]^t$, $N_{h}=[N_{h,1},N_{h,2},\dots,N_{h,n}]^t$, $\bar N_{v}=[\bar  N_{v,1},\bar  N_{v,2},\dots,\bar  N_{v,m}]^t$, $\delta=[\delta_1,\delta_2,\dots,\delta_m]^t$, $a=[a_1,a_2,\dots,a_m]^t$ and $\mu=[\mu_1,\mu_2,\dots,\mu_n]^t$.

We end this section by showing that the solutions of Model (\ref{PatchGenF}) are positive and bounded, or in other words, that the model is biologically grounded.

\begin{lemma}
The  region defined by 
\[\Omega=\left \{ (I_{h}, I_{v})\in\mathbb R^{n+m}_+\; \mbox{ \LARGE $\mid$  } \;I_{h}\leq N_{h},\; \; I_{v}\leq \bar N_{v} \right\} \] 
is a  compact attracting  positively invariant set for System (\ref{PatchMatrices}) .

\end{lemma}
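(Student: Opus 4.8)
The plan is to verify the three asserted properties --- compactness, positive invariance, and global attraction --- separately, relying throughout on the sign structure of the right-hand side of (\ref{PatchMatrices}). Compactness is immediate: $\Omega$ is the intersection of the closed half-spaces $\{I_{h,i}\ge0\}$, $\{I_{h,i}\le N_{h,i}\}$, $\{I_{v,j}\ge0\}$, $\{I_{v,j}\le\bar N_{v,j}\}$, so it is closed, and it is contained in the box $\prod_i[0,N_{h,i}]\times\prod_j[0,\bar N_{v,j}]$, hence bounded. Before anything else I would record the standing requirement that each effective patch size $\sum_{k}p_{kj}N_{h,k}$ is strictly positive, so that the incidence fractions in (\ref{PatchGenF}) are well defined and the vector field is smooth (in particular locally Lipschitz) on a neighborhood of $\Omega$; this secures existence and uniqueness of solutions.

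For positive invariance I would apply the standard tangency (Nagumo-type) criterion, checking the vector field on each boundary face of $\Omega$. On the face $I_{h,i}=0$ the host equation collapses to $\dot I_{h,i}=\beta_{vh}N_{h,i}\sum_j a_jp_{ij}I_{v,j}/\sum_k p_{kj}N_{h,k}\ge0$, and on $I_{v,j}=0$ the vector equation collapses to $\dot I_{v,j}=a_j\beta_{hv}\bar N_{v,j}\sum_i p_{ij}I_{h,i}/\sum_k p_{kj}N_{h,k}\ge0$, since every factor is nonnegative on $\Omega$; thus trajectories cannot leave through the coordinate hyperplanes. On the face $I_{h,i}=N_{h,i}$ the coupling term vanishes and $\dot I_{h,i}=-(\mu_i+\gamma_i)N_{h,i}<0$, and likewise on $I_{v,j}=\bar N_{v,j}$ one gets $\dot I_{v,j}=-(\mu_v+\delta_j)\bar N_{v,j}<0$, so the field points strictly inward at the upper faces. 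Together these sign checks establish that $\Omega$ is positively invariant, and the lower-face estimates in particular show that the nonnegative orthant itself is invariant.

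For the attracting property I would use scalar differential inequalities. Writing $w_i=I_{h,i}-N_{h,i}$, on the set $\{w_i>0\}$ the nonnegativity of $I_v$ renders the coupling term $\beta_{vh}(N_{h,i}-I_{h,i})(\cdots)$ nonpositive, whence $\dot w_i\le-(\mu_i+\gamma_i)I_{h,i}\le-(\mu_i+\gamma_i)w_i$; by Gr\"onwall's inequality $w_i(t)$ decays at least exponentially until it becomes nonpositive, i.e. $I_{h,i}$ enters $[0,N_{h,i}]$ and then remains there by invariance. The identical argument applied to $u_j=I_{v,j}-\bar N_{v,j}$, giving $\dot u_j\le-(\mu_v+\delta_j)u_j$ on $\{u_j>0\}$, shows $I_{v,j}$ is eventually trapped in $[0,\bar N_{v,j}]$. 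Combining invariance of the orthant with these upper bounds shows that every solution issued from $\mathbb R^{n+m}_+$ is drawn into $\Omega$.

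The one genuinely delicate point, and the only place requiring care beyond routine sign-chasing, is the well-posedness caveat flagged at the outset: the fractions in (\ref{PatchGenF}) are singular wherever an effective patch population vanishes, so the inward-pointing estimates demand that the denominators $\sum_k p_{kj}N_{h,k}$ stay bounded away from zero. Since the $N_{h,i}$ are fixed positive constants and $\mathbb P$ is a fixed row-stochastic matrix, this reduces to the natural structural hypothesis that every patch is visited by at least one group, which I would state explicitly; granting it, all the estimates above hold with uniform constants and the three properties follow.
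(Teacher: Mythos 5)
Your proof is correct, and for compactness and positive invariance it is essentially the argument the paper gives, just unpacked componentwise: the paper writes the right-hand side as $A(I_h,I_v)\,(I_h,I_v)^t$ and observes that $A$ is Metzler on $\Omega$ (which is exactly your sign check on the faces $I_{h,i}=0$ and $I_{v,j}=0$), and then notes $\dot I_{h}=-\mathrm{diag}(\mu+\gamma)I_{h}\leq 0$ at $I_h=N_h$ and $\dot I_{v}=-\mathrm{diag}(\mu_v+\delta)I_{v}\leq 0$ at $I_v=\bar N_v$ (your upper-face check). Where you genuinely add something is the \emph{attracting} property: the paper's proof stops at ``the vector field points inward from the faces of $\Omega$,'' which establishes invariance but not attraction from initial data in $\mathbb R^{n+m}_+\setminus\Omega$. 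Your differential inequalities $\dot w_i\leq-(\mu_i+\gamma_i)w_i$ and $\dot u_j\leq-(\mu_v+\delta_j)u_j$ on $\{w_i>0\}$ and $\{u_j>0\}$, combined with orthant invariance, supply that missing piece cleanly. Your standing hypothesis that every effective patch size $\sum_k p_{kj}N_{h,k}$ is strictly positive (every patch is visited by some group) is also a real, if implicit, requirement of the model that the paper does not state; flagging it is appropriate rather than pedantic, since the incidence terms are otherwise undefined.
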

\begin{proof}
The set $\Omega$, a subset $\mathbb R^{n+m}$, is clearly closed and bounded and hence a compact. The right-hand side of System (\ref{PatchMatrices}) could be written as $A(I_{h},I_v)(I_{h},I_v)^t$ where

$$A(I_{h},I_v)=\left(\begin{array}{cc}
-\textrm{diag}(\mu+\gamma) & \beta_{vh}\textrm{diag}(N_{h}-I_{h})\mathbb P\textrm{diag}(a)\textrm{diag}(\mathbb P^tN_h)^{-1}\\
 \beta_{hv}\textrm{diag}(a)\textrm{diag}(\bar N_{v}-I_{v})\textrm{diag}(\mathbb P^tN_h)^{-1}\mathbb P^t & -\textrm{diag}(\mu_v+\delta) 
\end{array}\right)$$
Since, $I_n\leq N_h$ and $I_v\leq \bar N_v$, the matrix $A(I_{h},I_v)$ is Metzler. Hence, the positive orthant $\mathbb R^{n+m}_+$ is invariant. At $I_h=N_{h}$, we have
$\dot I_{h}=-\textrm{diag}(\mu+\gamma) I_{h}\leq0$. Similarly, at $I_v=\bar N_{v}$, we have $\dot I_v=-\textrm{diag}(\mu_v+\delta)I_{v}\leq0$. Hence, the vector field of (\ref{PatchMatrices}) is pointed inward from the faces of $\Omega$.
\end{proof}
\section{Equilibria and Global Stability}\label{sec:R0GAS}
In the absence of infected vectors in all patches, Model (\ref{PatchMatrices}) supports a unique, disease free equilibrium (DFE), given by $E_0=\textbf{0}_{\mathbb R^{n+m}}$. The basic reproduction number, defined as the average number of secondary cases produced of by an infected individual during its  lifetime, is computed using the next generation method \cite{MR1057044,VddWat02}. The right hand side of (\ref{PatchMatrices}) could be written as $\mathcal F+\mathcal V$ where 
$$\mathcal F(I_h,I_v)=\left(\begin{array}{c}
\beta_{vh}\textrm{diag}(N_{h})\mathbb P\textrm{diag}(a)\textrm{diag}(\mathbb P^tN_h)^{-1}I_v\\
 \beta_{hv}\textrm{diag}(a)\textrm{diag}(N_{v})\textrm{diag}(\mathbb P^tN_h)^{-1}\mathbb P^t I_h
\end{array}\right)\quad \textrm{and}\quad \mathcal V(I_h,I_v)=\left(\begin{array}{c}
-\textrm{diag}(\mu+\gamma)I_h \\
 -\textrm{diag}(\mu_v+\delta) I_v
\end{array}\right) $$

Let $F=D\mathcal F(I_h,I_v)$ and $V=D\mathcal V(I_h,I_v)$ evaluated at the DFE. We obtain:

$$F=\left(\begin{array}{cc}
0 & \beta_{vh}\textrm{diag}(N_{h})\mathbb P\textrm{diag}(a)\textrm{diag}(\mathbb P^tN_h)^{-1}\\
 \beta_{hv}\textrm{diag}(a)\textrm{diag}(N_{v})\textrm{diag}(\mathbb P^tN_h)^{-1}\mathbb P^t & 0 
\end{array}\right)$$
and $$V=\left(\begin{array}{cc}
-V_h &0\\
0 & -V_v 
\end{array}\right)$$
where $V_h=\textrm{diag}(\mu+\gamma)$ and $V_v=\textrm{diag}(\mu_v+\delta)$.\\
The basic reproduction number is the spectral radius of the next generation matrix:

$$-FV^{-1}=\left(\begin{array}{cccc}
\textbf{0}_{n,n} & M_{vh}\\
M_{hv} & \textbf{0}_{m,m}
\end{array}\right)$$

where
$$M_{hv}=\beta_{hv}\textrm{diag}(a)\textrm{diag}(\mathbb P^t N_h)^{-1}\textrm{diag}(N_v)\mathbb P^tV_h^{-1}$$ and $$M_{vh}=\beta_{vh}\textrm{diag}(N_h)\mathbb P\textrm{diag}(\mathbb P^tN_h)^{-1}\textrm{diag}(a)V_v^{-1}$$

Notice also that since $(-FV)^2=\textrm{diag}(M_{vh},M_{hv})$, we can deduce that the basic reproduction number is $\mathcal R_0^2=\rho(M_{vh}M_{hv})$.

More precisely, we have that
$$M_{hv}=\left(\begin{array}{cccc}
\frac{a_1\beta_{hv}p_{11}N_{v,1}}{\sum_{i=1}^np_{i1}N_{h,i}(\mu_1+\gamma_1)} &\frac{ a_1\beta_{hv}p_{21}N_{v,1}}{\sum_{i=1}^np_{i1}N_{h,i}(\mu_2+\gamma_2)} &\cdots & \frac{ a_1\beta_{hv}p_{n1}N_{v,1}}{\sum_{i=1}^np_{i1}N_{h,i}(\mu_n+\gamma_n)}\\
\frac{a_2\beta_{hv}p_{12}N_{v,2}}{\sum_{i=1}^np_{i2}N_{h,i}(\mu_1+\gamma_1)} &\frac{ a_2\beta_{hv}p_{22}N_{v,2}}{\sum_{i=1}^np_{i2}N_{h,i}(\mu_2+\gamma_2)} &\cdots &  \frac{ a_2\beta_{hv}p_{n2}N_{v,2}}{\sum_{i=1}^np_{i2}N_{h,i}(\mu_n+\gamma_n)}\\
\vdots & \vdots& \cdots & \vdots\\
\frac{a_m\beta_{hv}p_{1m}N_{v,m}}{\sum_{i=1}^np_{im}N_{h,i}(\mu_1+\gamma_1)} &  \frac{a_m\beta_{hv}p_{2m}N_{v,m}}{\sum_{i=1}^np_{im}N_{h,i}(\mu_2+\gamma_2)} & \cdots &  \frac{ a_m\beta_{hv}p_{nm}N_{v,m}}{\sum_{i=1}^np_{im}N_{h,i}(\mu_n+\gamma_n)}
\end{array}\right)$$
and 
$$M_{vh}=\left(\begin{array}{cccc}
\frac{a_1\beta_{vh}p_{11}N_{h,1}}{\sum_{i=1}^np_{i1}N_{h,i}(\mu_v+\delta_1)} &\frac{ a_2\beta_{vh}p_{12}N_{h,1}}{\sum_{i=1}^np_{i2}N_{h,i}(\mu_v+\delta_2)} &\cdots & \frac{ a_m\beta_{hv}p_{1m}N_{h,1}}{\sum_{i=1}^np_{im}N_{h,i}(\mu_v+\delta_m)}\\
\frac{a_1\beta_{vh}p_{21}N_{h,2}}{\sum_{i=1}^np_{i1}N_{h,i}(\mu_v+\delta_1)} &\frac{ a_2\beta_{vh}p_{22}N_{h,2}}{\sum_{i=1}^np_{i2}N_{h,i}(\mu_v+\delta_2)} &\cdots &  \frac{ a_m\beta_{vh}p_{2m}N_{h,2}}{\sum_{i=1}^np_{im}N_{h,i}(\mu_v+\delta_m)}\\
\vdots & \vdots& \cdots & \vdots\\
\frac{a_1\beta_{vh}p_{n1}N_{h,n}}{\sum_{i=1}^np_{i1}N_{h,i}(\mu_v+\delta_1)} &  \frac{a_2\beta_{vh}p_{n2}N_{h,n}}{\sum_{i=1}^np_{i2}N_{h,i}(\mu_v+\delta_2)} & \cdots &  \frac{ a_m\beta_{hv}p_{nm}N_{h,n}}{\sum_{i=1}^np_{im}N_{h,i}(\mu_v+\delta_m)}
\end{array}\right)$$

Note that the matrices $M_{hv}$ and $M_{vh}$ are of $(m,n)$ and $(n,m)$ size respectively. The matrix $M_{vh}$ represents the new human cases due to infected mosquitoes  whereas $M_{hv}$ represents the new mosquito cases due to humans. In fact, the elements  of $M_{vh}$ and $M_{hv}$ have specific biological interpretations, for instance,
\begin{itemize}
\item For $i=1,\dots, n$ and $j=1,\dots, m$, we have $(m_{hv})_{ji}=\frac{ a_j\beta_{hv}p_{ij}\bar N_{v,j}}{(\mu_i+\gamma_i)\sum_{l=1}^np_{lj}N_{h,l}}$ represents the average number of secondary vector (of Patch $j$)  cases of infection produced by a single infected of group $i$ during his/her infectious period.
\item Similarly, for $i=1,\dots, n$ and $j=1,\dots, m$, we have $(m_{vh})_{ij} =\frac{ a_j\beta_{hv}p_{ij}N_{h,i}}{(\mu_v+\delta_j)\sum_{l=1}^np_{lj}N_{h,l}}$ represents the average number of secondary human (of group $i$) cases generated by an infected vector (of Patch $j$) during her infectious period. 

\item The overall number of new cases produced by an infected mosquitoes in Patch $j$ is the sum of the elements of the $j^{th}$ column of $M_{vh}$, that is,

\begin{eqnarray}
\sum_{i=1}^{n}(m_{vh})_{ij} &=&\sum_{i=1}^{n}\frac{ a_j\beta_{hv}p_{ij}N_{h,i}}{(\mu_v+\delta_j)\sum_{l=1}^np_{lj}N_{h,l}}\nonumber\\
&=&\frac{ a_j\beta_{hv}}{(\mu_v+\delta_j)\sum_{l=1}^np_{lj}N_{h,l}}\sum_{i=1}^{n}p_{ij}N_{h,i}\nonumber\\
&=&\frac{ a_j\beta_{hv}}{(\mu_v+\delta_j)}.\nonumber
\end{eqnarray}

\end{itemize}
The overall new vector cases generated by an infected human of group $i$  is the sum of the first column of $M_{hv}$, namely,

\begin{eqnarray}
\sum_{j=1}^{m}(m_{hv})_{ji}&=&\sum_{j=1}^{m}\frac{ a_j\beta_{hv}p_{ij}\bar N_{v,j}}{(\mu_i+\gamma_i)\sum_{l=1}^np_{lj}N_{h,l}}\nonumber\\
&=&\frac{\beta_{hv}}{(\mu_i+\gamma_i)}\sum_{j=1}^{m}\frac{ a_jp_{ij}\bar N_{v,j}}{\sum_{l=1}^np_{lj}N_{h,l}}.\nonumber
\end{eqnarray}

 The matrix $$\left(\begin{array}{cc}
0 & M_{vh}\\
M_{vh} & 0
\end{array}\right)$$
or equivalently the matrix $$\left(\begin{array}{cc}
0 & \mathcal N_{vh}\\
\mathcal N_{vh} & 0
\end{array}\right)$$
where $\mathcal N_{vh}=\textrm{diag}(N_h)\mathbb P\textrm{diag}(\mathbb P^tN_h)^{-1}\textrm{diag}(a)$ and $\mathcal N_{hv}=\textrm{diag}(a)\textrm{diag}(\mathbb P^t N_h)^{-1}\textrm{diag}(\bar N_v)\mathbb P^t$ is what authors in  \cite{iggidr2014dynamics} called host-vector network. 
  It is proven in \cite{iggidr2014dynamics} that the matrix $-FV^{-1}$ is irreducible if and only if $M_{hv}M_{vh}$ and $M_{vh}M_{hv}$
 are both irreducible. Notice that, even if $n=m$, the irreducibility of $\mathbb P$ is neither necessary nor sufficient to ensure the irreducibility of $M_{hv}M_{vh}$ and $M_{vh}M_{hv}$. See the subsection \ref{Subsection2P2G} for counter examples. \\
\begin{remark} \hfill

If we suppose that $n=m$ and $p_{ij}=0,\;\; \forall \{i,j\}\in\{1,2,\dots,n\}^2$ and $i\neq j$ then $\mathbb P=I_n$.  Hence, $M_{hv}$ and $M_{vh}$ are diagonal matrices and so is their product $M_{hv}M_{vh}$. Hence,
$$\mathcal R_0^2=\max\{(\mathcal R_0^1)^2,(\mathcal R_0^2)^2,\dots,(\mathcal R_0^n)^2\} $$
where $$(\mathcal R_0^i)^2=\frac{a_i^2\beta_{vh}\beta_{hv}\bar N_{v,i}}{\mu_v(\mu_i+\gamma_i)N_{h,i}}$$
For each $i$, the basic reproduction number $(\mathcal R_0^i)^2$ is the one derived from the classical Ross model.
\end{remark}

\begin{theorem}\label{TheoDFEGAS} Under the assumption that the host-vector network is irreducible, we have that
\begin{enumerate}
\item If $\mathcal R_0\leq1$, the DFE is globally asymptotically stable. 
\item If $\mathcal R_0>1$, the DFE is unstable.
\end{enumerate}
\end{theorem}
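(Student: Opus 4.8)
The plan is to show that System (\ref{PatchMatrices}) is dominated, on the invariant region $\Omega$, by its own linearization at the DFE, and then to read off the global behaviour from Perron--Frobenius theory applied to that linearization. Writing $x=(I_h,I_v)^t$ and letting $f(x)$ denote the right-hand side of (\ref{PatchMatrices}), the Jacobian at the DFE is exactly $J:=F+V$, because the only nonlinearities are the quadratic products $I_{h,i}I_{v,j}$. On $\Omega$ one has $N_{h,i}-I_{h,i}\le N_{h,i}$ and $\bar N_{v,j}-I_{v,j}\le\bar N_{v,j}$, so the infection terms are sublinear and therefore $f(x)\le Jx$ componentwise on $\Omega$. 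The matrix $J$ is Metzler, and since $V$ is diagonal it has the same off-diagonal pattern as $F$; thus irreducibility of the host--vector network makes $J$ an irreducible Metzler matrix. Finally, the next generation construction gives the standard equivalence $\mathrm{sign}\,s(J)=\mathrm{sign}(\mathcal R_0-1)$, where $s(J)$ is the spectral abscissa (largest real part of an eigenvalue of $J$) \cite{VddWat02}.

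Part 2 is then immediate: if $\mathcal R_0>1$ then $s(J)>0$, so the Jacobian of the full system at the DFE has an eigenvalue with positive real part, and the DFE is unstable.

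For Part 1, Perron--Frobenius for irreducible Metzler matrices supplies a strictly positive left eigenvector $w\gg0$ with $w^tJ=s(J)w^t$, and $s(J)$ is a simple eigenvalue strictly dominating the real parts of all others. I would take the linear Lyapunov function $L(x)=w^tx$, which is positive on $\Omega\setminus\{0\}$ and vanishes only at the DFE. Using $f(x)\le Jx$ together with $w\gg0$,
$$\dot L=w^tf(x)\le w^tJx=s(J)\,L.$$
If $\mathcal R_0<1$ then $s(J)<0$, so $\dot L\le s(J)L<0$ off the DFE, $L\to0$, and the DFE is GAS.

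The delicate case is $\mathcal R_0=1$, where $s(J)=0$ gives only $\dot L\le0$; here I would invoke LaSalle's principle on the compact invariant set $\Omega$, and I expect this to be the main obstacle. Writing $\dot L=-w^t\bigl(Jx-f(x)\bigr)$, the nonnegative vector $Jx-f(x)$ has components $\beta_{vh}I_{h,i}\sum_j a_jp_{ij}I_{v,j}/(\sum_k p_{kj}N_{h,k})$ and $a_j\beta_{hv}I_{v,j}\sum_i p_{ij}I_{h,i}/(\sum_k p_{kj}N_{h,k})$, so $\{\dot L=0\}$ forces every quadratic correction to vanish, and on that set the flow reduces to the linear flow $\dot x=Jx$. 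Let $\mathcal M$ be the largest invariant subset. Any trajectory in $\mathcal M$ is a bounded full orbit of $\dot x=Jx$; since $s(J)=0$ is simple with all other eigenvalues having negative real part, boundedness confines the orbit to the one-dimensional kernel $\mathrm{span}\{v\}$, where $v\gg0$ is the positive right Perron eigenvector ($Jv=0$). Thus $x=cv$ with $c\ge0$; but if $c>0$ then $x\gg0$ and the quadratic corrections are strictly positive, contradicting membership in $\{\dot L=0\}$. Hence $\mathcal M=\{0\}$, LaSalle yields global attractivity of the DFE, and Lyapunov stability follows from $\dot L\le0$ with $L$ positive definite, giving GAS. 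It is precisely in this boundary case that irreducibility is indispensable, since it guarantees the strictly positive Perron eigenvectors $v,w$ and the simplicity and dominance of $s(J)$.
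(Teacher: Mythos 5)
Your proof is correct, but it takes a genuinely different route from the paper. The paper disposes of Theorems \ref{TheoDFEGAS} and \ref{TheoEEGAS} simultaneously by appealing to H.~L.~Smith's theory of cooperative systems with concave nonlinearities \cite{smith1986cooperative}: System (\ref{PatchMatrices}) is cooperative (its Jacobian is Metzler on $\Omega$), concave and irreducible, and Smith's Theorem 3.1 and Corollary 3.2 then deliver the full dichotomy --- every orbit tends to the origin when $s(J)\leq 0$ and to a unique positive equilibrium when $s(J)>0$. You replace that machinery with a self-contained comparison-plus-Lyapunov argument: the sublinearity $f(x)\leq Jx$ (which is precisely the concavity Smith's theorem exploits), the left Perron eigenvector $w\gg 0$ of the irreducible Metzler matrix $J=F+V$, the linear Lyapunov function $L=w^{t}x$, and LaSalle for the critical case $\mathcal R_0=1$. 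The steps all check out: irreducibility of the host--vector network does give irreducibility of $J$ because $V$ is diagonal; the equivalence $\mathrm{sign}\,s(J)=\mathrm{sign}(\mathcal R_0-1)$ is indeed \cite{VddWat02}; and your LaSalle analysis correctly uses that for an irreducible Metzler matrix $s(J)$ is simple and strictly dominant, so a bounded full orbit of $\dot x=Jx$ must sit on the Perron ray, where any nonzero point makes the quadratic corrections strictly positive (since $a_j>0$ and each row of $\mathbb P$ sums to one). What the paper's route buys is brevity and Theorem \ref{TheoEEGAS} for free; what yours buys is an elementary, explicit proof that makes the role of irreducibility transparent and treats the boundary case $\mathcal R_0=1$ --- which the citation-based argument leaves implicit --- in full detail. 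The only thing your argument does not reach is the endemic-equilibrium theorem, which was not part of the statement anyway.
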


\begin{theorem}\label{TheoEEGAS}
Under the assumption that the vector-host network is irreducible, we have that if $\mathcal R_0>1$ then there exists a unique endemic equilibrium that is globally asymptotically stable.
\end{theorem}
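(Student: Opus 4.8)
The plan is to exploit the order structure already visible in the proof of the invariance Lemma. On the compact order interval $\Omega=[\,\mathbf 0,(N_h,\bar N_v)\,]$ the vector field of System~(\ref{PatchMatrices}) has a Metzler Jacobian, so the induced flow is monotone (cooperative). I would upgrade this to a \emph{strongly monotone, strictly sublinear} flow and then appeal to the sharp dichotomy available for such systems, which delivers existence, uniqueness and global stability of the endemic equilibrium in one stroke.

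First I would record three structural properties of the vector field $f=(f_h,f_v)$ of System~(\ref{PatchGenF}) on $\Omega$. (i) \emph{Cooperativity}: the only off-diagonal couplings, $\partial f_{h,i}/\partial I_{v,j}=\beta_{vh}(N_{h,i}-I_{h,i})\,a_jp_{ij}/\sum_k p_{kj}N_{h,k}$ and the symmetric $\partial f_{v,j}/\partial I_{h,i}=a_j\beta_{hv}(\bar N_{v,j}-I_{v,j})\,p_{ij}/\sum_k p_{kj}N_{h,k}$, are nonnegative on $\Omega$, so $Df$ is Metzler, as in the invariance Lemma. (ii) \emph{Irreducibility}: the bipartite incidence pattern of these couplings is exactly that of the host-vector network, so the standing assumption that the network is irreducible makes $Df$ irreducible on the interior of $\Omega$; hence the flow is strongly monotone. (iii) \emph{Strict sublinearity}: for $\lambda\in(0,1)$ and $x\gg\mathbf 0$ one has $f_{h,i}(\lambda x)=\lambda\big[\beta_{vh}(N_{h,i}-\lambda I_{h,i})\sum_j a_jp_{ij}\tfrac{I_{v,j}}{\sum_k p_{kj}N_{h,k}}-(\mu_i+\gamma_i)I_{h,i}\big]>\lambda f_{h,i}(x)$, since $N_{h,i}-\lambda I_{h,i}>N_{h,i}-I_{h,i}$ and the infection sum is positive; the same holds for $f_{v,j}$, so $f(\lambda x)>\lambda f(x)$ componentwise.

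With these properties the system falls under the classical theory of cooperative, irreducible and strictly sublinear flows on an order interval (Hirsch; Smith, \emph{Monotone Dynamical Systems}; Zhao). That theory yields a dichotomy governed by the stability modulus $s(Df(\mathbf 0))$ at the DFE $E_0=\mathbf 0$: if $s(Df(\mathbf 0))\le0$ every orbit tends to $\mathbf 0$, whereas if $s(Df(\mathbf 0))>0$ there is a unique interior equilibrium $x^\ast$ and every orbit issuing from a nonzero point converges to $x^\ast$. It then remains to match the sign of $s(Df(\mathbf 0))$ with $\mathcal R_0$: writing the linearization at $E_0$ as $F+V$ in the next-generation splitting introduced above, the standard van den Driessche--Watmough equivalence gives $s(F+V)>0\iff\mathcal R_0=\rho(-FV^{-1})>1$, which is precisely the instability recorded in Theorem~\ref{TheoDFEGAS}(2). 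Hence $\mathcal R_0>1$ puts us in the second alternative. Finally I would close the boundary gap: the dichotomy gives convergence from interior data, and for a nonzero point on $\partial\Omega$ strong monotonicity forces the orbit into the interior for all $t>0$, after which it converges to $x^\ast$. As the only orbit missing $x^\ast$ is the trivial one at $E_0$, the endemic equilibrium is globally asymptotically stable on $\Omega\setminus\{E_0\}$.

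The hard part is the third paragraph. One must pin down the exact hypotheses of the monotone--sublinear dichotomy (compactness and forward invariance of $\Omega$ are supplied by the invariance Lemma, but strong monotonicity and strict sublinearity must be checked on the right domain) and justify the spectral equivalence $s(Df(\mathbf 0))>0\iff\mathcal R_0>1$. Should a self-contained route be preferred, uniqueness follows directly from strict sublinearity by a comparison argument---if $x^\ast\neq y^\ast$ were two positive equilibria, setting $\bar\lambda=\sup\{\lambda:\lambda x^\ast\le y^\ast\}$ and using $f(\bar\lambda x^\ast)>\bar\lambda f(x^\ast)=\mathbf 0$ forces $\bar\lambda\ge1$, and symmetrically $x^\ast=y^\ast$---while global stability can instead be obtained from a Volterra-type Lyapunov function $V=\sum_i c_i\big(I_{h,i}-I_{h,i}^\ast\ln I_{h,i}\big)+\sum_j d_j\big(I_{v,j}-I_{v,j}^\ast\ln I_{v,j}\big)$ whose weights $c_i,d_j$ are furnished by the matrix-tree theorem applied to the irreducible host-vector network.
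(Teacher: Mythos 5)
Your proposal follows essentially the same route as the paper: the authors likewise observe that System (\ref{PatchMatrices}) is cooperative, strongly concave (your strict sublinearity), with irreducible Jacobian under the irreducibility of the host--vector network, and invoke H.~L.~Smith's threshold dichotomy for such monotone systems (Theorem 3.1 and Corollary 3.2 of \cite{smith1986cooperative}). Your write-up merely verifies the hypotheses and the spectral equivalence $s(Df(\mathbf 0))>0\iff\mathcal R_0>1$ in more detail than the paper does.
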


Theorems \ref{TheoDFEGAS} and \ref{TheoEEGAS} can be obtained by using H. L. Smith's results \cite{smith1986cooperative}. Indeed, it is immediate that System (\ref{PatchMatrices}) is cooperative, strongly concave and its Jacobian is irreducible. Hence, the theorems can be obtained following Smith's results \cite{smith1986cooperative} (Theorem 3.1 and Corollary 3.2). A similar method using Hirsch's theorem \cite{Hirsch84} is outlined by Iggidr et \textit{al.} in \cite{iggidr2012global} for an $SIS$ metapopulation model.

A sharp threshold results of a multi-group vector-borne disease model has been obtained in \cite{iggidr2014dynamics} by using nicely crafted Lyapunov functions and elements of graph theory.
\section{Effects of heterogeneity}\label{sec:HeteroR0}
In this section, we take a closer look to the effect of heterogeneity on the basic reproduction number and  provide simple bounds for the basic reproduction number that may be useful in applications. We also compare the effects of \textit{patchiness} (the role of variable number of patches/environments) and \textit{groupness} (the role of variable hosts' groups) on the basic reproduction number. We denote $\mathcal R_0^2(m,n)$ the basic reproduction  number for $n$ groups and $m$ patches.

The  basic reproduction number is the spectral radius of $M_{vh}M_{hv}$. This matrix is supposed irreducible with entries are given by:

\begin{equation}\label{MvhMhv}
r_{ij}=\frac{\beta_{vh}\beta_{hv}N_{h,i}}{\mu_j+\gamma_j}\sum_{k=1}^{m}\frac{a_k^2p_{ik}p_{jk}N_{v,k}}{(\sum_{l=1}^{n}p_{lk}N_{h,l})^2(\mu_v+\delta_k)}\quad \forall i,j=1,\dots,n.
\end{equation}
$\mathbb R=(r_{ij})$ is an $n\times n$ matrix. The basic reproduction number is also the spectral radius of $M_{hv}M_{vh}$, a matrix with entries

\begin{equation}\label{MhvMvh}
\tilde r_{ij}=\frac{a_ia_j\beta_{hv}\beta_{vh}N_{v,i}}{(\mu_v+\delta_j)(\sum_{l=1}^np_{li}N_{h,l})(\sum_{l=1}^np_{lj}N_{h,l})}\sum_{k=1}^n\frac{p_{ki}p_{kj}N_{h,k}}{\mu_k+\gamma_k} \quad \forall i,j=1,\dots,m. \end{equation}

The next theorem collects a set of inequalities that identify lower and upper bounds for the basic reproduction number.

\begin{theorem}\label{TheoBounds} \hfill

\begin{enumerate}
\item $\displaystyle \min_{j=1,\dots,n}L_j \leq \mathcal R_0^2(n,m)\leq \max_{j=1,\dots,n}L_j$
where
$$L_j=\frac{\beta_{hv}\beta_{hv}}{\mu_j+\gamma_j}\sum_{k=1}^m\frac{a_k^2p_{jk}N_{v,k}}{(\sum_{l=1}^{n}p_{lk}N_{h,l})(\mu_v+\delta_k)}$$
\item $\displaystyle \min_{i=1,\dots,n}L_i^\sharp \leq \mathcal R_0^2(n,m)\leq \max_{i=1,\dots,n} L_i^\sharp$ where
$$L_i^\sharp=\sum_{k=1}^m\frac{a_k^2\beta_{hv}\beta_{hv}p_{ik}N_{h,i}N_{v,k}}{(\sum_{l=1}^{n}p_{lk}N_{h,l})^2(\mu_v+\delta_k)}\left(\sum_{j=1}^n\frac{p_{jk}}{\mu_j+\gamma_j}\right)$$
\item 
$\displaystyle \min_{j=1,\dots,m}L_j^\diamond \leq \mathcal R_0^2(n,m)\leq \max_{j=1,\dots,m}L_j^\diamond$ where
$$
L_j^\diamond=\frac{a_j\beta_{hv}\beta_{vh}}{(\mu_v+\delta_j)(\sum_{l=1}^np_{lj}N_{h,l})}\sum_{k=1}^{n}\frac{p_{kj}N_{h,k}}{\mu_k+\gamma_k} \left( \sum_{i=1}^m\frac{a_ip_{ki}N_{v,i}}{\sum_{l=1}^np_{li}N_{h,l}}   \right)
$$

\item  
$\displaystyle \min_{i=1,\dots,m}L_i^\dagger \leq \mathcal R_0^2(n,m)\leq \max_{i=1,\dots,m}L_i^\dagger$ where
$$
L_i^\dagger=\frac{a_i\beta_{hv}\beta_{vh}N_{v,i}}{\sum_{l=1}^np_{li}N_{h,l}}\sum_{k=1}^{n}\frac{p_{ki}N_{h,k}}{\mu_k+\gamma_k}\left( \sum_{j=1}^m\frac{a_jp_{kj}}{(\mu_v+\delta_j)\sum_{l=1}^np_{lj}N_{h,l}}   \right)
$$

\end{enumerate}
\end{theorem}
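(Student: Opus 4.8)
The plan is to obtain all four pairs of inequalities from a single classical fact about nonnegative matrices: for any nonnegative square matrix $A=(a_{ij})$ the spectral radius is squeezed between the smallest and largest row sums,
$$\min_i \sum_j a_{ij} \;\le\; \rho(A) \;\le\; \max_i \sum_j a_{ij},$$
and, applying the same statement to $A^t$ (which has the same spectrum as $A$), between the smallest and largest column sums. Since $\mathcal R_0^2=\rho(M_{vh}M_{hv})=\rho(M_{hv}M_{vh})$ and both product matrices have nonnegative entries, it suffices to compute the row and column sums of the two matrices $\mathbb R=(r_{ij})$ (an $n\times n$ matrix) and $(\tilde r_{ij})$ (an $m\times m$ matrix) given in (\ref{MvhMhv}) and (\ref{MhvMvh}), and to recognize the resulting expressions as the quantities $L_j$, $L_i^\sharp$, $L_j^\diamond$, $L_i^\dagger$. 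Thus the whole theorem is four specializations of one elementary bound: items (1)--(2) are the column-sum and row-sum bounds for $\mathbb R$, and items (3)--(4) are the column-sum and row-sum bounds for $(\tilde r_{ij})$.

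Concretely, for item (1) I would sum $r_{ij}$ over $i$ and interchange the order of the $i$-sum with the inner $k$-sum; this produces the factor $\sum_{i=1}^{n} p_{ik}N_{h,i}=\sum_{l=1}^{n} p_{lk}N_{h,l}$, which cancels \emph{one} power of the effective patch size from the denominator and leaves exactly $L_j$. For item (2) I would instead sum $r_{ij}$ over $j$; here $N_{h,i}$ stays outside, no cancellation occurs (the denominator remains squared), and the inner factor $\sum_{j=1}^{n} p_{jk}/(\mu_j+\gamma_j)$ appears, which is precisely $L_i^\sharp$. Items (3) and (4) are handled identically from $(\tilde r_{ij})$: summing over $i$ and rearranging gives $L_j^\diamond$, while summing over $j$ gives $L_i^\dagger$; in these two cases the denominators are already linear in the effective sizes, so the only algebraic step is the interchange of the finite sums over the patch and group indices.

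I would expect the bulk of the work to be purely computational bookkeeping, so the only genuine care needed is (a) to match index conventions so that the correct sum (row versus column) is paired with the correct bound, and (b) to verify the one cancellation $\sum_i p_{ik}N_{h,i}=\sum_l p_{lk}N_{h,l}$ responsible for dropping the denominator from a square to a single power in item (1). I do not anticipate a deep obstacle. Finally, I note that the irreducibility hypothesis is not strictly required for these inequalities, since the row- and column-sum bounds hold for every nonnegative matrix; it serves to guarantee that $\mathcal R_0^2$ is the Perron root, and it would additionally upgrade each bound to a strict inequality whenever the relevant row (respectively column) sums are not all equal.
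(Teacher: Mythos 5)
Your proposal is correct and follows essentially the same route as the paper: the paper invokes exactly this row/column-sum (Frobenius) bound on the Perron root, applies it to the column sums and row sums of $M_{vh}M_{hv}$ for items (1)--(2) and of $M_{hv}M_{vh}$ for items (3)--(4), and performs the same interchange of sums with the single cancellation $\sum_i p_{ik}N_{h,i}=\sum_l p_{lk}N_{h,l}$ in item (1). Your closing remark that irreducibility is not needed for the inequalities themselves is a small but accurate observation beyond what the paper states.
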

\begin{proof}
\begin{enumerate}
\item\label{TheoremR01}
Since $M_{vh}M_{hv}$ is a  nonnegative irreducible matrix, the basic reproduction number $\mathcal R_0^2=\rho(M_{vh}M_{hv})$ satisfy the Frobenius' inequality:
$$\min_j r_{j}(M_{vh}M_{hv})\leq \mathcal R_0^2(n,m)\leq\max_j r_{j}(M_{vh}M_{hv})$$
where $r_{j}(M_{vh}M_{hv})=\sum_{i=1}^{n}r_{ij}$ and $r_{ij}$ are given by (\ref{MvhMhv}). We have:

\begin{eqnarray}
r_{j}(M_{vh}M_{hv}) &=&\sum_{i=1}^{n}r_{ij}\nonumber\\
&=&\sum_{i=1}^{n}\frac{\beta_{vh}\beta_{hv}N_{h,i}}{\mu_j+\gamma_j}\sum_{k=1}^{m}\frac{a_k^2p_{ik}p_{jk}N_{v,k}}{(\sum_{l=1}^{n}p_{lk}N_{h,l})^2(\mu_v+\delta_k)}\nonumber\\
&=&\frac{\beta_{vh}\beta_{hv}}{\mu_j+\gamma_j}\sum_{k=1}^{m}\frac{a_k^2p_{jk}N_{v,k}}{(\sum_{l=1}^{n}p_{lk}N_{h,l})^2(\mu_v+\delta_k)}\sum_{i=1}^{n}p_{ik}N_{h,i}\nonumber\\
&=&\frac{\beta_{vh}\beta_{hv}}{\mu_j+\gamma_j}\sum_{k=1}^{m}\frac{a_k^2p_{jk}N_{v,k}}{(\sum_{l=1}^{n}p_{lk}N_{h,l})(\mu_v+\delta_k)}\nonumber\\
&:=& L_j \nonumber
\end{eqnarray}

\item This inequality is obtained in the same way as \ref{TheoremR01} this time, by summing over the columns of $M_{vh}M_{hv}$ and using Frobenius' inequality. 
\item \label{Th3.1.4} By considering the fact that $\mathcal R_0^2(n,m)$ is also the spectral radius of matrix $M_{hv}M_{vh}$, the Frobenius' inequality leads to
$$\min_j \tilde r_{j}(M_{hv}M_{vh})\leq \mathcal R_0^2(n,m)\leq\max_j \tilde r_{j}(M_{hv}M_{vh}),$$
where $\tilde r_{j}(M_{hv}M_{vh})=\sum_{i=1}^{m}\tilde r_{ij}$ and $\tilde r_{ij}$ are given by (\ref{MhvMvh}). It follows that,
\begin{eqnarray}
\tilde r_{j}(M_{hv}M_{vh}) &=&\sum_{i=1}^{m}\tilde r_{ij}\nonumber\\
&=&\sum_{i=1}^{m}\frac{a_ia_j\beta_{hv}\beta_{vh}N_{v,i}}{(\mu_v+\delta_j)(\sum_{l=1}^np_{li}N_{h,l})(\sum_{l=1}^np_{lj}N_{h,l})}\sum_{k=1}^n\frac{p_{ki}p_{kj}N_{h,k}}{\mu_k+\gamma_k} \nonumber\\
&=&\frac{a_j\beta_{hv}\beta_{vh}}{(\mu_v+\delta_j)(\sum_{l=1}^np_{lj}N_{h,l})}\sum_{i=1}^{m}\frac{a_ip_{ki}N_{v,i}}{(\sum_{l=1}^np_{li}N_{h,l})}\sum_{k=1}^n\frac{p_{kj}N_{h,k}}{\mu_k+\gamma_k} \nonumber\\
&:=& L_j^\diamond \nonumber
\end{eqnarray}
\item Let $r_i(M_{hv}M_{vh})$ denote the sum of the entries along the $i$-th row of $M_{hv}M_{vh}$. We have:
\begin{eqnarray}
\tilde r_{i}(M_{hv}M_{vh}) &=&\sum_{j=1}^{m}\tilde r_{ij}\nonumber\\
&=&\sum_{j=1}^{m}\frac{a_ia_j\beta_{hv}\beta_{vh}N_{v,i}}{(\mu_v+\delta_j)(\sum_{l=1}^np_{li}N_{h,l})(\sum_{l=1}^np_{lj}N_{h,l})}\sum_{k=1}^n\frac{p_{ki}p_{kj}N_{h,k}}{\mu_k+\gamma_k} \nonumber\\
&=&\frac{a_i\beta_{hv}\beta_{vh}N_{v,i}}{(\sum_{l=1}^np_{li}N_{h,l})}\sum_{j=1}^{m}\frac{a_jN_{v,i}}{(\mu_v+\delta_j)(\sum_{l=1}^np_{lj}N_{h,l})}\sum_{k=1}^n\frac{p_{ki}p_{kj}N_{h,k}}{\mu_k+\gamma_k} \nonumber\\
&:=& L_i^\dagger \nonumber
\end{eqnarray}
We deduce the inequality as in \ref{Th3.1.4}.
 
\end{enumerate}
\end{proof}
Note that the bounds $L_j$ and $L_j^\sharp$ can be interpreted biologically. $L_j$ is the sum of the products of the number of secondary cases of infections on mosquitoes (of Patch $k$, $k=1,\dots,m$) produced by infected host  of Group $j$ ($\frac{a_k\beta_{hv}\bar N_{v,k}}{\mu_j+\gamma_j}$) and secondary cases of infections on hosts (of Group $j$) produced by infected mosquitoes in Patch $k$ ($\frac{a_k\beta_{vh}}{\mu_v+\delta_k}\frac{p_{jk}}{\sum_{l=1}^np_{lk}N_{h,l}}$)  for  $k=1,\dots,m$.\\

Similarly, $L^\sharp$ is the sum of the product between the number of secondary cases produced by infected mosquitoes (of Patch $k$) on hosts, that is, $\displaystyle\frac{a_k\beta_{hv}p_{ik}N_{h,i}}{(\sum_{l=1}^{n}p_{lk}N_{h,l})(\mu_v+\delta_k)}$,
and the secondary mosquito cases of infection produced by infected hosts during their infectious period, that is, $\displaystyle\sum_{j=1}^n\frac{a_k}{\mu_j+\gamma_j}\frac{p_{jk}}{\sum_{l=1}^{n}p_{lk}N_{h,l}}$.

 \begin{theorem}
 If the residence time matrix is of rank one then an explicit expression of the basic reproduction number for the general system is given by
 $$\mathcal R_0^2(m,n)=\frac{\beta_{vh}\beta_{hv}}{(\sum_{l=1}^{n}p_{l}N_{h,l})^2}\sum_{k=1}^{m}\frac{a_k^2N_{v,k}}{(\mu_v+\delta_k)}\left( \sum_{i=1}^{n}\frac{p_{i}^2N_{h,i}}{\mu_i+\gamma_i}\right)$$
 \end{theorem}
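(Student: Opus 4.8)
The plan is to exploit the fact that a rank-one residence matrix forces the entire matrix $M_{vh}M_{hv}$ to be rank one as well, so that its spectral radius collapses onto its trace. Since $\mathbb P$ has rank one, I would write its entries as an outer product $p_{ij}=p_i q_j$, where $(p_i)_{i=1}^n$ is a group vector and $(q_j)_{j=1}^m$ a patch vector; the symbol $p_i$ in the statement refers precisely to this group factor. Note that the target expression is homogeneous of degree zero in the $p_i$ (the numerator carries $p_i^2$ and the denominator $(\sum_l p_l N_{h,l})^2$), so the normalization of the factorization is immaterial, which is what makes the formula compatible with the stochasticity constraint $\sum_j p_{ij}=1$.

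First I would substitute $p_{ik}=p_iq_k$ into the explicit entries $r_{ij}$ of $\mathbb R=M_{vh}M_{hv}$ recorded in (\ref{MvhMhv}). In the $k$-th summand the product $p_{ik}p_{jk}=p_ip_jq_k^2$ appears in the numerator, while the effective-population denominator becomes $(\sum_l p_{lk}N_{h,l})^2=q_k^2(\sum_l p_l N_{h,l})^2$; hence every patch factor $q_k^2$ cancels. What is left is
\[
r_{ij}=\frac{\beta_{vh}\beta_{hv}\,p_i p_j N_{h,i}}{(\mu_j+\gamma_j)\,(\sum_{l=1}^n p_l N_{h,l})^2}\sum_{k=1}^m\frac{a_k^2 N_{v,k}}{\mu_v+\delta_k}.
\]

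The crucial observation is that this displays $r_{ij}=u_i w_j$ as a pure product of a factor depending only on $i$ and one depending only on $j$; that is, $\mathbb R=u w^t$ is rank one, with $u_i=p_iN_{h,i}$ and $w_j=\frac{\beta_{vh}\beta_{hv}p_j}{(\mu_j+\gamma_j)(\sum_l p_lN_{h,l})^2}\sum_k\frac{a_k^2N_{v,k}}{\mu_v+\delta_k}$. A rank-one matrix $uw^t$ has the single nonzero eigenvalue $w^t u=\sum_i u_iw_i=\mathrm{tr}(\mathbb R)$, and since all entries here are nonnegative this value is itself the spectral radius. Therefore $\mathcal R_0^2=\rho(\mathbb R)=\sum_{i=1}^n r_{ii}$, and inserting $r_{ii}$ from the display above and factoring out the quantities independent of $i$ yields exactly the claimed expression.

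The only real content is the cancellation of the patch factors $q_k$ in the second step: it is what turns the a priori $n\times n$ matrix $\mathbb R$ into a rank-one object and renders its spectral radius computable in closed form. Everything afterwards is the elementary fact that the spectral radius of a nonnegative rank-one matrix equals its trace, together with routine algebraic rearrangement, so I would expect no genuine obstacle beyond careful bookkeeping of the indices.
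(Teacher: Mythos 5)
Your proof is correct and follows essentially the same route as the paper: a rank-one residence matrix makes $M_{vh}M_{hv}$ rank one, so its spectral radius collapses to its trace, which is then computed from the entries $r_{ij}$. The only difference is that the paper first invokes row-stochasticity to normalize the factorization to $\mathbb P=\mathbbm{1}p^t$ with $p\in\mathbb R^m$ (so the group factor is constant), whereas you keep a general outer product $p_iq_j$ and appeal to degree-zero homogeneity in the group factor -- a bookkeeping choice that in fact makes the indices in the stated formula read consistently.
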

\begin{proof} \hfill

Let us suppose that the residence time matrix $\mathbb P$ is of rank 1. There exist $x\in\mathbb R^n_+$ and $y\in\mathbb R^m_+$ such that $\mathbb P=xy^T$. Since $\mathbb P$ is stochastic, we can deduce that $x_i\sum_{j=1}^my_j=1$ for $i=1,2,\dots,n$. Therefore,  $\mathbb P$ could be written as $\mathbb P=\mathbbm{1}p^t$ where $p\in\mathbb R^m$ and $\sum_{i=1}^mp_i=1$. Hence, the matrices $M_{vh}M_{hv}$ and $M_{hv}M_{vh}$ are also of rank one. Therefore the trace of  $M_{vh}M_{hv}$ is only positive eigenvalue of  $M_{vh}M_{hv}$.  Hence, by using (\ref{MvhMhv}), we obtain:
\begin{eqnarray}
\mathcal R^2_0(m,n)&=&\sum_{i=1}^n\frac{\beta_{vh}\beta_{hv}N_{h,i}}{\mu_i+\gamma_i}\sum_{k=1}^{m}\frac{a_k^2p_{i}^2N_{v,k}}{(\sum_{l=1}^{n}p_{l}N_{h,l})^2(\mu_v+\delta_k)}\nonumber\\
&=&\frac{\beta_{vh}\beta_{hv}}{(\sum_{l=1}^{n}p_{l}N_{h,l})^2}\sum_{i=1}^n\frac{p_{i}^2N_{h,i}}{\mu_i+\gamma_i}\sum_{k=1}^{m}\frac{a_k^2N_{v,k}}{(\mu_v+\delta_k)}\nonumber\\
&=&\frac{\beta_{vh}\beta_{hv}}{(\sum_{l=1}^{n}p_{l}N_{h,l})^2}\sum_{k=1}^{m}\frac{a_k^2N_{v,k}}{(\mu_v+\delta_k)}\left( \sum_{i=1}^{n}\frac{p_{i}^2N_{h,i}}{\mu_i+\gamma_i}\right)\nonumber
\end{eqnarray}
\end{proof}
If we assume that ``virtual" dispersal does not induce any substantial change in the population of each patch, i.e $p_{i}N_{h,i}=N_{h,i}$ at any time and that $\mu_i=\mu$ and $\gamma_i=\gamma$ for all $i=1,2,\dots, n$, then we recover the result of  Dye and Hasibeder  \cite{dye1986population,hasibeder1988population}, namely,
$$\mathcal R^2_0(m,n)=\mathcal R^2_0(m,1),$$
where $\mathcal R^2_0(m,1)$ is the basic reproduction number corresponding of $m$ patches of vectors and a single host group. Similarly, if $\delta_j=\delta$ and $a_j=a$ for all $j=1,2,\dots,m$, we obtain $$\mathcal R^2_0(m,n)=\mathcal R^2_0(1,n),$$
where $\mathcal R^2_0(1,n)$ is the basic reproduction number corresponding of $n$ host groups and a single patch of vectors.\\

For $m$ patches and one group, the basic reproduction number is given by
$$\mathcal R_0^2(m,1)=\frac{\beta_{vh}\beta_{hv}}{(\mu+\gamma)N_{h}}\sum_{k=1}^{m}\frac{a_k^2p^2_{1k}N_{v,k}}{(\mu_v+\delta_k)}.$$

The basic reproduction number associated with  single group and single environment turns out to be the classical $\mathcal R_0^2$, that is, $\mathcal R_0^2(1,1)=\frac{a^2\beta_{vh}\beta_{hv}}{(\mu+\gamma)(\mu_v+\delta)}\frac{N_{v}}{N_{h}}$. We arrive at the following result:
\begin{lemma}\label{R0Vectors}
We have 
$$\mathcal R_0^2(m,1)\geq \mathcal R_0^2(1,1)$$
\end{lemma}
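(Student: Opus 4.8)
The plan is to start from the closed form of $\mathcal{R}_0^2(m,1)$ and reduce the claim to an elementary inequality. First I would record the structural fact that drives everything: in the single-group expression the residence weight $p_{1k}$ enters the host exposure term in the numerator and simultaneously enters the effective patch population $p_{1k}N_h$ in the denominator, and the same happens on the vector side, so $p_{1k}$ cancels patchwise in both transmission directions. Consequently $\mathcal{R}_0^2(m,1)$ collapses to a clean sum over patches,
$$\mathcal{R}_0^2(m,1)=\frac{\beta_{vh}\beta_{hv}}{(\mu+\gamma)N_h}\sum_{k=1}^{m}\frac{a_k^2 N_{v,k}}{\mu_v+\delta_k},$$
in which each summand is precisely the classical Ross--Macdonald number $\mathcal{R}_{0,k}^2(1,1)=\dfrac{a_k^2\beta_{vh}\beta_{hv}N_{v,k}}{(\mu+\gamma)(\mu_v+\delta_k)N_h}$ of patch $k$ taken in isolation. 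Thus $\mathcal{R}_0^2(m,1)$ is nothing but the sum of the per-patch reproduction numbers.

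Given this decomposition the inequality is immediate in the most literal reading: every term is nonnegative, so the full sum dominates any single term, and in particular it dominates the single-patch value $\mathcal{R}_0^2(1,1)$ attached to the reference patch. I would then give the quantitative (heterogeneity) version, which is the real content: if the single patch is taken to be the homogenized aggregate --- total vectors $N_v=\sum_k N_{v,k}$, common removal $\mu_v+\delta$, and biting rate equal to the vector-weighted average $a=\big(\sum_k a_k N_{v,k}\big)/N_v$ --- then the claim $\sum_k a_k^2 N_{v,k}\ge a^2 N_v$ is exactly the Cauchy--Schwarz inequality $\big(\sum_k a_k N_{v,k}\big)^2\le\big(\sum_k a_k^2 N_{v,k}\big)\big(\sum_k N_{v,k}\big)$, with equality if and only if all the $a_k$ coincide. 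Either route closes the lemma, and the second makes transparent the intended message that spreading vectors over heterogeneous patches never lowers $\mathcal{R}_0$.

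The main obstacle is not the computation but pinning down precisely which single-patch baseline $\mathcal{R}_0^2(1,1)$ is meant, since the answer dictates which elementary inequality is required: a bare nonnegativity argument if the baseline is one of the constituent patches, versus Cauchy--Schwarz (equivalently convexity of $a\mapsto a^2$) if it is the homogenized average. A secondary technical point is the heterogeneity of the removal rates $\delta_k$: when these differ one should either fix a common $\mu_v+\delta$ in the aggregate or absorb the factors $1/(\mu_v+\delta_k)$ into the Cauchy--Schwarz weights, replacing $N_{v,k}$ by $N_{v,k}/(\mu_v+\delta_k)$ throughout, after which the same argument goes through verbatim.
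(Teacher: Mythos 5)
Your first route is correct and is, at bottom, the paper's own argument: write $\mathcal R_0^2(m,1)$ as a sum over patches of nonnegative terms and bound it below by the single term belonging to the reference patch. Where you differ is in how that sum is set up, and your version is actually cleaner. The paper keeps the factors $p_{1k}^2$ in its displayed formula for $\mathcal R_0^2(m,1)$, retains the $k=1$ summand with its $p_{11}^2$, and then has to assert that ``$p_{11}=1$ for a single patch and single host'' --- an identification awkward enough that a remark is appended explaining that $p_{11}$ must be read simultaneously in the $m$-patch and the $1$-patch models. Your observation that $p_{1k}$ cancels patchwise (it enters the exposure term linearly and the effective population $p_{1k}N_h$ linearly, on both legs of the transmission cycle) is exactly what the general entry formula (\ref{MvhMhv}) gives when specialized to $n=1$, so each summand is literally the classical Ross--Macdonald number of patch $k$ in isolation and the inequality is pure nonnegativity, with no convention on $p_{11}$ required. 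Your second, Cauchy--Schwarz route against a homogenized aggregate patch is not in the paper; it establishes a different and stronger comparison (heterogeneity in the biting rates never lowers $\mathcal R_0$ relative to the vector-averaged patch, with equality iff the $a_k$ coincide), and the baseline ambiguity you flag is real --- it is precisely the issue the paper's remark is trying to smooth over. For the lemma as literally stated, your first route suffices and matches the intended proof.
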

\begin{proof}\hfill
\begin{eqnarray}\label{R0M1R011}
\mathcal R_0^2(m,1) &=&\frac{\beta_{vh}\beta_{hv}}{(\mu_+\gamma_1)N_{h}}\sum_{k=1}^{m}\frac{a_k^2p^2_{1k}\bar N_{v,k}}{(\mu_v+\delta_k)}\\
&\geq& \frac{\beta_{vh}\beta_{hv}}{(\mu_+\gamma_1)N_{h}}\frac{a_k^2p_{11}^2\bar N_{v,1}}{(\mu_v+\delta_1)}:= \mathcal R_0^2(1,1)\nonumber
\end{eqnarray}
since $p_{11}=1$ for a single patch and single host.
\end{proof}

\begin{remark}
In Lemma \ref{R0Vectors}, we are comparing the basic reproduction number of the $m$ patches and 1 group case with the one of 1 patch and 1 group case. And so, the $p_{11}$ in the RHS of the inequality in the proof of Lemma \ref{R0Vectors}, is seen both as the $p_{11}$ of the single patch, single  group case and the $m$ patches, single group case. 
\end{remark}

Lemma \ref{R0Vectors} states that, for a single group, the presence of patch/environmental heterogeneity might increase the basic reproduction number. 
\section{Special cases and Simulations} \label{sec:Simulations}
In this section, we provide examples of cases where the number of patches and number of groups are either equal or non-equal, that is, we highlight in a limited way the role of \textit{patchiness} and \textit{groupness}. We start off by the case of two patches and two groups to showcase that even when the residence times matrix $\mathbb P$ is square, its irreducibility is neither necessary nor sufficient to ensure the irreducibility of the next generation matrix. This implies that the disease either dies out or persists in all patches and groups. We then consider the three patches and two groups for which the disease persists in all groups and patches in an attempt to see how the differential in residence times leads to a differential in the disease burden for hosts and vectors.   
\subsection{The two patches and two groups case}\label{Subsection2P2G}
 As stated at the derivation of the model (Section \ref{sec:Derivation}), this system could model either the case where there are two patches within which there are hosts and vectors or it could model the case where there are two groups of hosts interacting in two different patches. The basic reproduction number, for $n=2,\; m=2$ is $\rho(M_{vh}M_{hv})$, where
$$M_{hv}=\left(\begin{array}{cc}
\frac{a_1\beta_{hv}p_{11}\bar N_{v,1}}{(p_{11}N_{h,1}+p_{21}N_{h,2})(\mu_1+\gamma_1)} &\frac{ a_1\beta_{hv}p_{21}\bar N_{v,1}}{(p_{11}N_{h,1}+p_{21}N_{h,2})(\mu_2+\gamma_2)} \\
\frac{a_2\beta_{hv}p_{12}\bar N_{v,2}}{(p_{12}N_{h,1}+p_{22}N_{h,2})(\mu_1+\gamma_1)} &\frac{ a_2\beta_{hv}p_{22}\bar N_{v,2}}{(p_{12}N_{h,1}+p_{22}N_{h,1})(\mu_2+\gamma_2)}
\end{array}\right)$$
and
$$M_{vh}=\left(\begin{array}{ccc}
 \frac{a_1\beta_{vh} p_{11}N_{h,1}}{(p_{11}N_{h,1}+p_{21}N_{h,2})(\mu_v+\delta_1)} & \frac{a_2\beta_{vh} p_{12}N_{h,1}}{(p_{12}N_{h,1}+p_{22}N_{h,2})(\mu_v+\delta_2)} \\
 \frac{a_1\beta_{vh} p_{21}N_{h,2}}{(p_{11}N_{h,1}+p_{21}N_{h,2})(\mu_v+\delta_1)} & \frac{a_2\beta_{vh} p_{22}N_{h,2}}{(p_{12}N_{h,1}+p_{22}N_{h,2})(\mu_v+\delta_2)}\\
\end{array}\right)$$
We have that
{
$$M_{vh}M_{hv}=\left(\begin{array}{cc}
m_{11} & m_{12} \\
m_{21}&m_{22}
\end{array}\right)$$

where $m_{11}=\frac{a_1^2\beta_{hv}\beta_{vh}p_{11}^2\bar N_{v,1}N_{h,1}}{(p_{11}N_{h,1}+p_{21}N_{h,2})^2(\mu_1+\gamma_1)(\mu_v+\delta_1)}+\frac{a_2^2\beta_{hv}\beta_{vh}p_{12}^2\bar N_{v,2}N_{h,1}}{(p_{12}N_{h,1}+p_{22}N_{h,2})^2(\mu_1+\gamma_1)(\mu_v+\delta_2)}$,
$$m_{12}=\frac{a_1^2\beta_{hv}\beta_{vh}p_{11}p_{21}\bar N_{v,1}N_{h,1}}{(p_{11}N_{h,1}+p_{21}N_{h,2})^2(\mu_2+\gamma_2)(\mu_v+\delta_1)} +\frac{a_2^2\beta_{hv}\beta_{vh}p_{12}p_{22}\bar N_{v,2}N_{h,1}}{(p_{12}N_{h,1}+p_{22}N_{h,2})^2(\mu_2+\gamma_2)(\mu_v+\delta_2)}, 
$$
$$m_{21}=\frac{a_1^2\beta_{hv}\beta_{vh}p_{11}p_{21}\bar N_{v,1}N_{h,2}}{(p_{11}N_{h,1}+p_{21}N_{h,2})^2(\mu_1+\gamma_1)(\mu_v+\delta_1)} +\frac{a_2^2\beta_{hv}\beta_{vh}p_{12}p_{22}\bar N_{v,2}N_{h,2}}{(p_{12}N_{h,1}+p_{22}N_{h,2})^2(\mu_1+\gamma_1)(\mu_v+\delta_2)}, 
$$
and 
$$m_{22}=\frac{a_1^2\beta_{hv}\beta_{vh}p_{21}^2\bar N_{v,1}N_{h,2}}{(p_{11}N_{h,1}+p_{21}N_{h,2})^2(\mu_2+\gamma_2)(\mu_v+\delta_1)}+\frac{a_2^2\beta_{hv}\beta_{vh}p_{22}^2\bar N_{v,2}N_{h,2}}{(p_{12}N_{h,1}+p_{22}N_{h,2})^2(\mu_2+\gamma_2)(\mu_v+\delta_2)}$$
We observe, that even for the case $n=m$, the irreducibility of $\mathbb P$ is nor necessary not sufficient to ensure the irreducibility of $M_{vh}M_{hv}$ and $M_{hv}M_{vh}$. Indeed, if $p_{12}=0$, $p_{21}>0$ and $p_{22}>0$, the residence time matrix is given by
$$\left(\begin{array}{cc}1& 0\\ p_{21} & p_{22}\end{array}\right)
$$ is reducible whereas
$$M_{vh}M_{hv}=\left(\begin{array}{cc}
\frac{a_1^2\beta_{hv}\beta_{vh}p_{11}^2\bar N_{v,1}N_{h,1}}{(p_{11}N_{h,1}+p_{21}N_{h,2})^2(\mu_1+\gamma_1)(\mu_v+\delta_1)}& \frac{a_1^2\beta_{hv}\beta_{vh}p_{11}p_{21}\bar N_{v,1}N_{h,1}}{(N_{h,1}+p_{21}N_{h,2})^2(\mu_2+\gamma_2)(\mu_v+\delta_1)}  \\
\frac{a_1^2\beta_{hv}\beta_{vh}p_{21}\bar N_{v,1}N_{h,2}}{(N_{h,1}+p_{21}N_{h,2})^2(\mu_1+\gamma_1)(\mu_v+\delta_1)}  &\frac{a_1^2\beta_{hv}\beta_{vh}p_{21}^2\bar N_{v,1}N_{h,2}}{(N_{h,1}+p_{21}N_{h,2})^2(\mu_2+\gamma_2)(\mu_v+\delta_1)}+\frac{a_2^2\beta_{hv}\beta_{vh}\bar N_{v,2}}{N_{h,2}(\mu_2+\gamma_2)(\mu_v+\delta_2)}
\end{array}\right)$$
is irreducible. Similarly, the residence times matrix $$\mathbb P=\left(\begin{array}{cc}0& 1\\ 1 & 0\end{array}\right)$$ is irreducible while the non-diagonal entries of $M_{vh}M_{hv}$ are equal to zero, that is, $m_{12}=m_{21}=0$. Hence, $M_{vh}M_{hv}$ is not irreducible.
If the matrices $M_{hv}M_{vh}$ and $M_{vh}M_{hv}$ are not both irreducible, we may obtain boundary equilibria for which the disease dies out in some hosts' groups and vectors' patches while persisting in others. See Fig \ref{IhReducible} and \ref{IvReducible} for instance.
\subsection{The three patches and two groups case} 
As an illustrative example, we consider System (\ref{PatchGenF}) for the case $n=2$ groups and $m=3$ patches. The basic reproduction number is the spectral radius of 
\begin{equation}
   -FV^{-1}= \begin{pmatrix}
        \begin{array}{cc}0 & 0\\ 0 & 0\end{array} & M_{vh} \\
     M_{hv}&    \begin{array}{ccc}0 & 0 & 0\\ 0 & 0 & 0\end{array} 
    \end{pmatrix}
\end{equation}

where 
$$M_{hv}=\left(\begin{array}{cc}
\frac{a_1\beta_{hv}p_{11}\bar N_{v,1}}{(p_{11}N_{h,1}+p_{21}N_{h,2})(\mu_1+\gamma_1)} &\frac{ a_1\beta_{hv}p_{21}\bar N_{v,1}}{(p_{11}N_{h,1}+p_{21}N_{h,2})(\mu_2+\gamma_2)} \\
\frac{a_2\beta_{hv}p_{12}\bar N_{v,2}}{(p_{12}N_{h,1}+p_{22}N_{h,2})(\mu_1+\gamma_1)} &\frac{ a_2\beta_{hv}p_{22}\bar N_{v,2}}{(p_{12}N_{h,1}+p_{22}N_{h,1})(\mu_2+\gamma_2)}\\
\frac{a_3\beta_{hv}p_{13}\bar N_{v,3}}{(p_{13}N_{h,1}+p_{23}N_{h,2})(\mu_1+\gamma_1)} &\frac{ a_3\beta_{hv}p_{23}\bar N_{v,3}}{(p_{13}N_{h,1}+p_{23}N_{h,1})(\mu_2+\gamma_2)} 
\end{array}\right)$$
and
$$M_{vh}=\left(\begin{array}{ccc}
 \frac{a_1\beta_{vh} p_{11}N_{h,1}}{(p_{11}N_{h,1}+p_{21}N_{h,2})(\mu_v+\delta_1)} & \frac{a_2\beta_{vh} p_{12}N_{h,1}}{(p_{12}N_{h,1}+p_{22}N_{h,2})(\mu_v+\delta_2)} & \frac{a_3\beta_{vh} p_{13}N_{h,1}}{(p_{13}N_{h,1}+p_{23}N_{h,2})(\mu_v+\delta_3)}\\
 \frac{a_1\beta_{vh} p_{21}N_{h,2}}{(p_{11}N_{h,1}+p_{21}N_{h,2})(\mu_v+\delta_1)} & \frac{a_2\beta_{vh} p_{22}N_{h,2}}{(p_{12}N_{h,1}+p_{22}N_{h,2})(\mu_v+\delta_2)}&\frac{a_3\beta_{vh} p_{23}N_{h,2}}{(p_{13}N_{h,1}+p_{23}N_{h,2})(\mu_v+\delta_3)}\\
\end{array}\right)$$

For purposes of simulations, we use the following baseline parameters with the ranges given in parentheses.
$$\beta_{hv}=0.5 (0.001-0.54),\;\;\; \beta_{vh}=0.41 (0.3-0.9)\;\; \frac{1}{\mu_v}=20 (10-30) \;\textrm{days}, \frac{1}{\mu_1}=75\times 365 \;\textrm{days}$$
$$a_1=0.5\;\textrm{day}^{-1},\;\; a_2=0.4\;\textrm{day}^{-1},\;\; a_3=0.3\;\textrm{day}^{-1}$$
 $$\frac{1}{\mu_2}=73\times 365\;\textrm{day}s \;\textrm{days},\; \frac{1}{\gamma_1}=7 \;\textrm{days},\; \frac{1}{\gamma_2}=6 \;\textrm{days},\; \delta_1=0.001\; \textrm{day}^{-1},\;\delta_2=0.01,\;\delta_3=0.08\;\textrm{day}^{-1}.$$
The values of $\beta_{hv}$, $\beta_{vh}$ and $\mu_v$ are taken from \cite{chitnis2013modelling}. The host populations and the recruitments of vectors for the 3 patches are taken as
 $$N_{h,1}=4000,\;\; N_{h,2}=4500,\;\; \Lambda_{v,1}=1000,\;\; \Lambda_{v,2}=1000,\;\; \Lambda_{v,3}=950.$$
Unless otherwise stated, we fix $p_{13}=0.1$ and $p_{23}=0.2$, carrying out System (\ref{PatchGenF}) simulations that focus on the effects of non-fixed residence times matrix entries on the prevalence of hosts and vectors.  

\begin{figure}[ht]
\centering
 \subfigure[The level of prevalence of host of group 1 seems to decrease as $p_{12}$ increases (and hence $p_{11}$ decreases).]{
   \includegraphics[scale =.35]{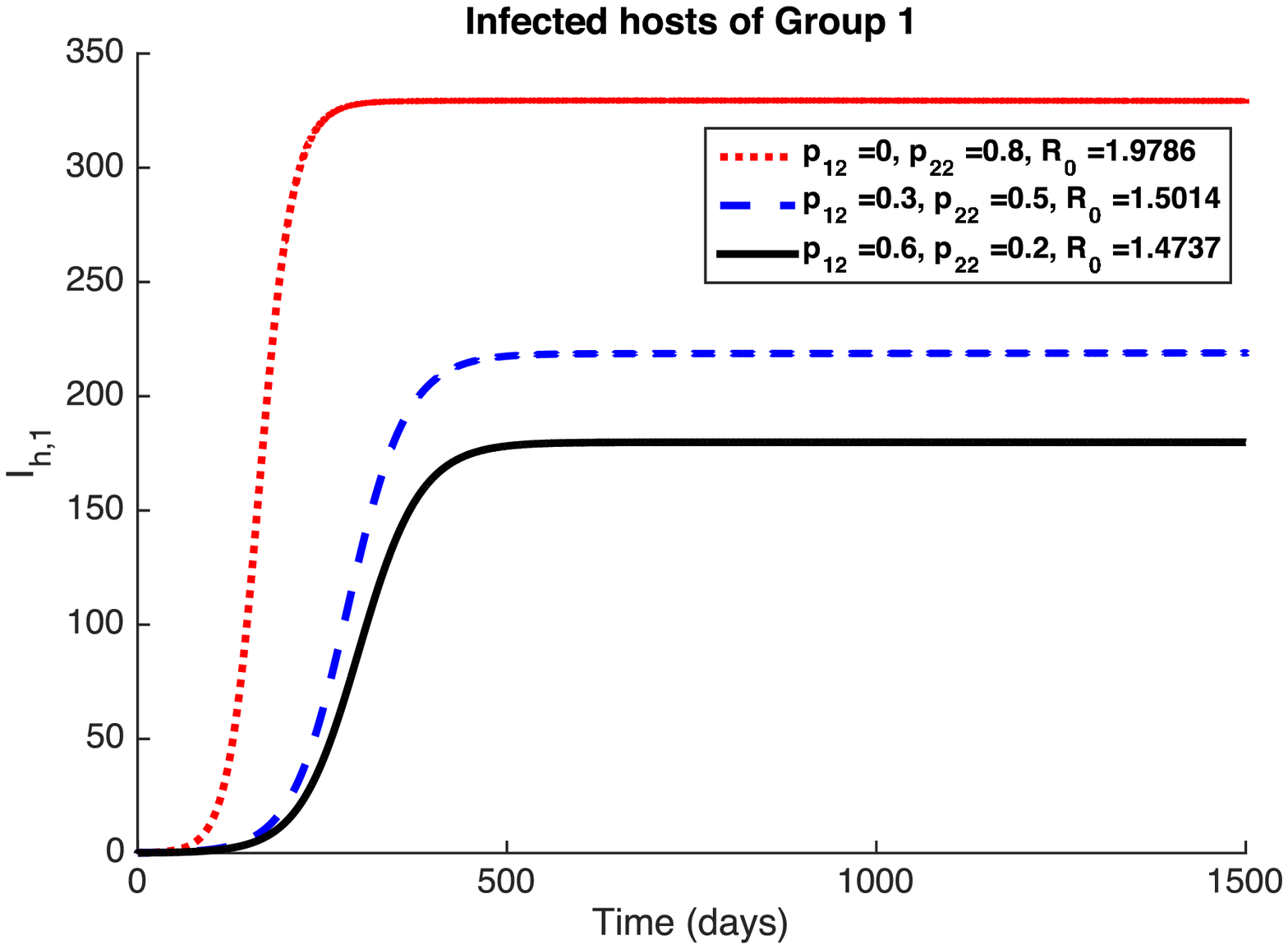}
\label{Ih1Normal2}}
\hspace{1mm}
 \subfigure[The level of prevalence of host of group 2 seems to decrease with respect to $p_{22}$.]{
   \includegraphics[scale =.35] {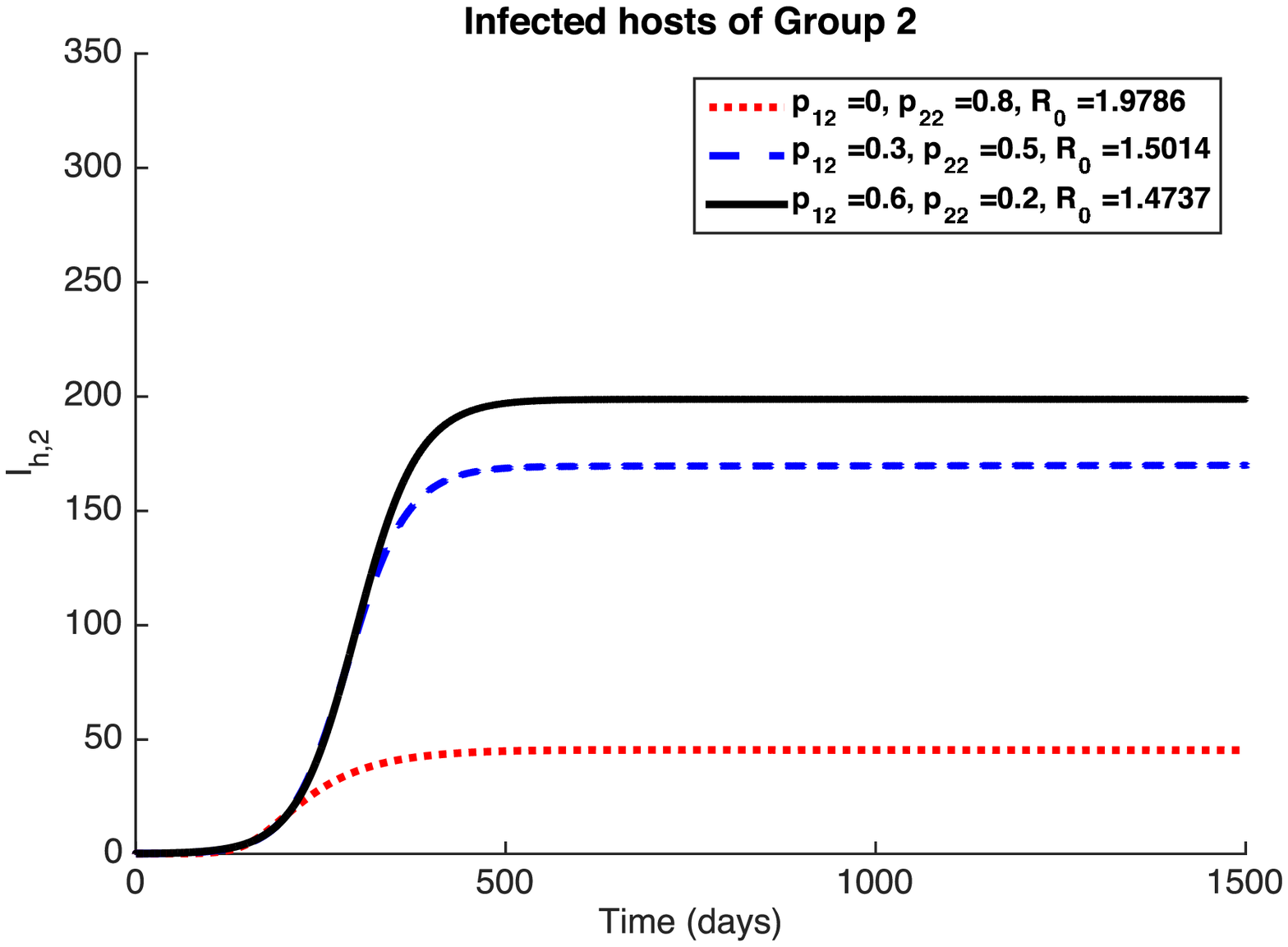}
\label{Ih2Normal2}}
\caption{Dynamics of $I_{h,1}$ and $I_{h,2}$ for different values of $p_{ij}$.} \label{fig:twofigsIh}
\end{figure}

Figure \ref{fig:twofigsIh} displays the dynamics of infected hosts of Group 1 (Fig \ref{Ih1Normal2}) and Group 2 (Fig \ref{Ih2Normal2}). The level of endemicity of individuals of Group 1 seems to decrease as proportion of time in Patch 2 ($p_{12}$) increases; probably because as $p_{12}$ increases, $p_{11}$ decreases. In other words, individuals of Group 1 spend more time in the less riskier Patch 2 ($a_2=0.4$) than in the riskier Patch 1 ($a_1=0.5$).  We see that the less time that  individuals spend in riskier environment the less likely that they will become infected, as one would expect.

In Figure \ref{Ih2Normal2}, the level of endemicity of hosts in Group 2 seems to decrease as $p_{22}$ increases or equivalently as $p_{21}$ decreases ( $p_{21}+p_{22}+p_{23}=1$ and $p_{23}$ is fixed).  It is so because individuals are increasing their residence time in Patch 2 ($a_2=0.4$) rather than in Patch 1.

\begin{figure}[ht]
\centering
 \subfigure[The level of prevalence of vectors of Patch 1 is decreasing as $p_{12}$ increases and $p_{22}$ increases.]{
   \includegraphics[scale =.35] {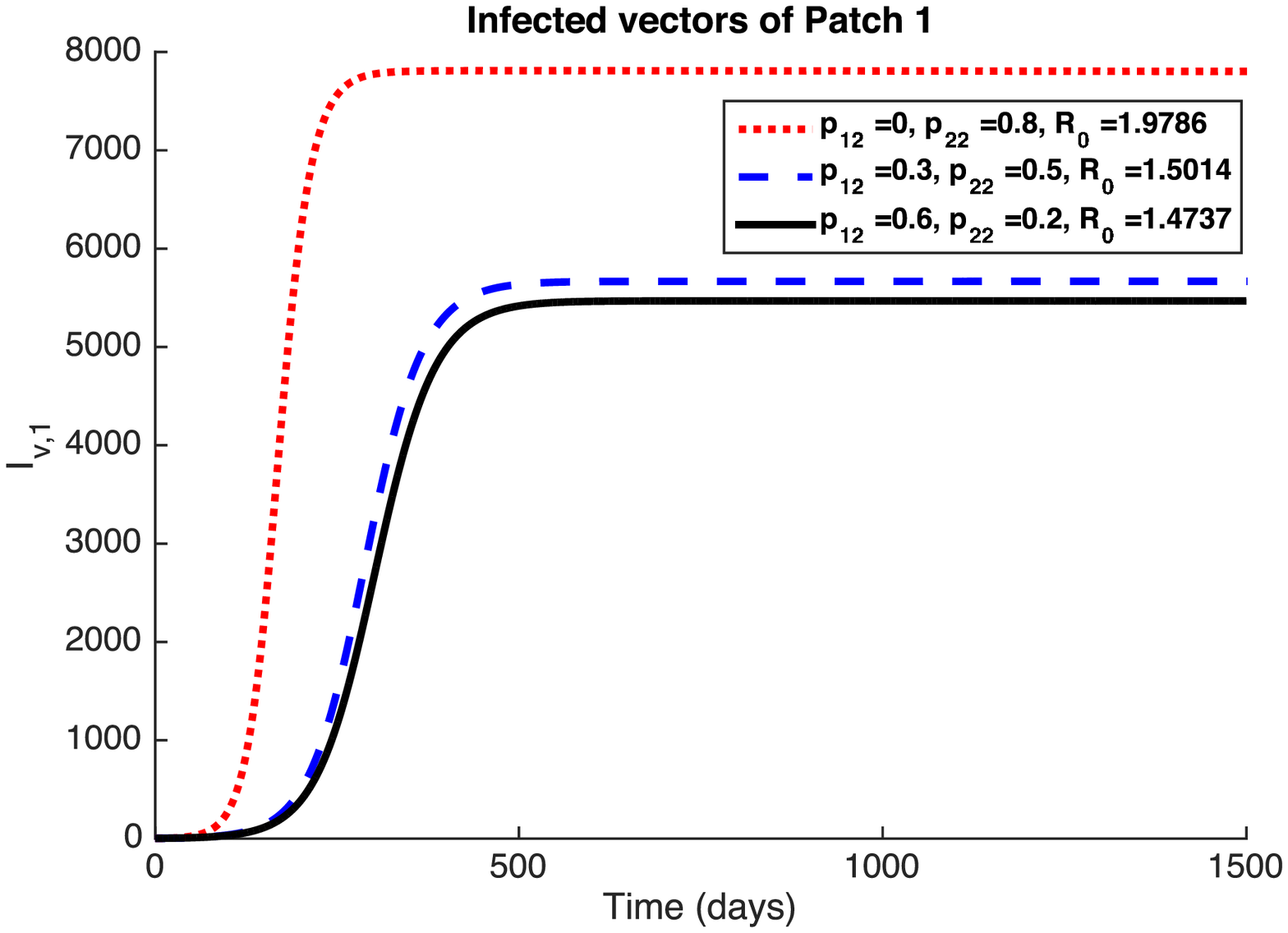}
\label{Iv1}}
\hspace{1mm}
 \subfigure[The level of prevalence of host of group 2]{
   \includegraphics[scale =.35] {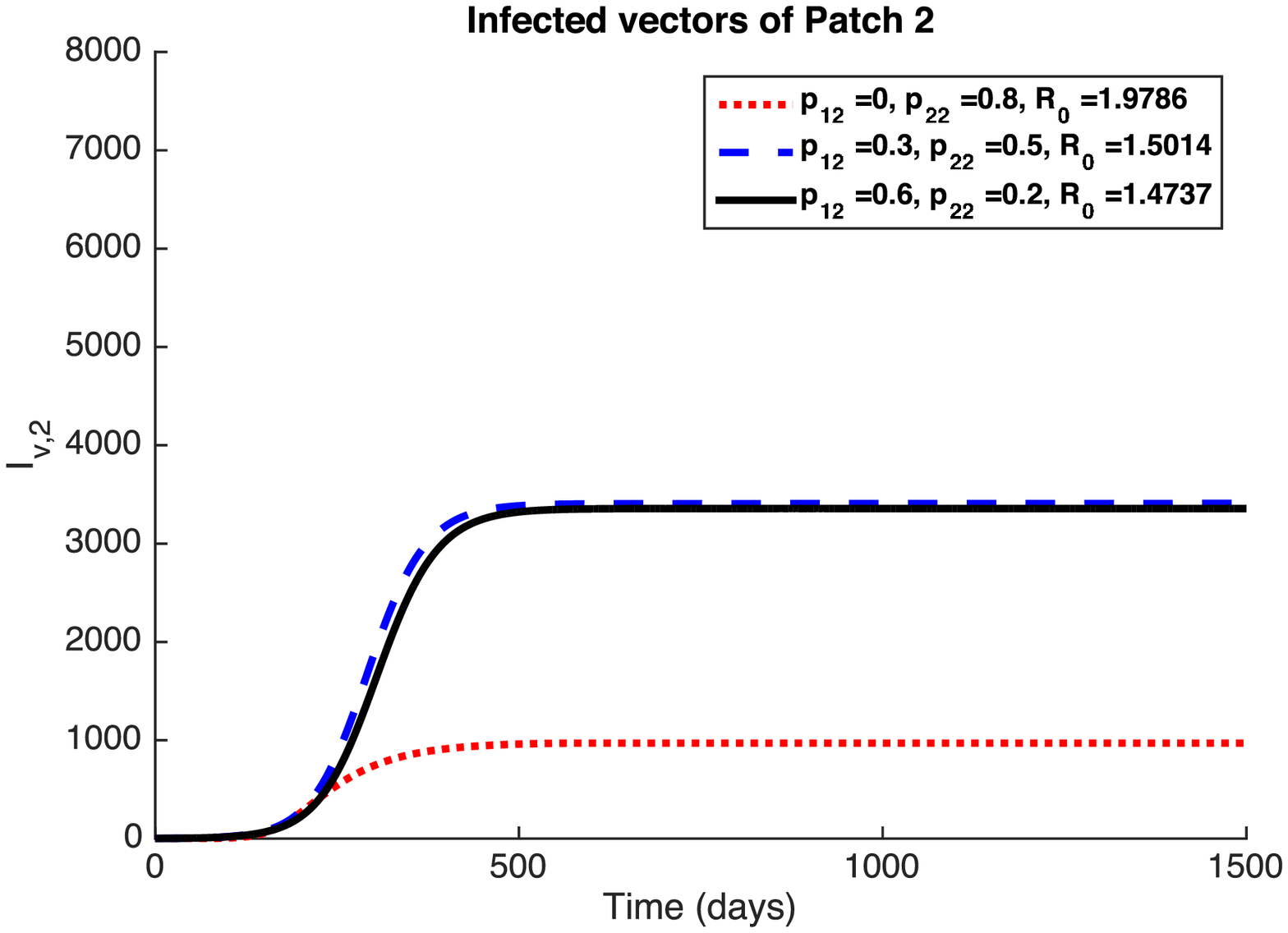}
\label{Iv2}}
\caption{Dynamics of $I_{v,1}$ and $I_{v,2}$ for different values of $p_{ij}$.} \label{fig:twofigsIv1Iv2}
\end{figure}
Figures \ref{fig:twofigsIv1Iv2} and  \ref{fig:Iv3Normal} offer an overview on how the dynamics of vectors change as the proportion of time  that individuals of Group 1 and Group 2 spend in environments 1, 2 and 3 varies. For the selected residence times matrix entries, the prevalence of vector in environment 1 (see Fig \ref{Iv1}) is at its highest if $p_{12}=0$ and $p_{22}=0.8$. With this configuration,  $p_{11}=0.9$ and Patch 1 has the highest \textit{effective} population size. Moreover, Patch 1 has the highest biting rate, leading to high level of vector infections in that patch.

 Though $\Lambda_{v,1}=\Lambda_{v,2}$, the prevalence of vectors Patch 2 (Fig \ref{Iv2}) is lower than of Patch 1 (Fig \ref{Iv1}), regardless of the combination of the chosen residence times entries. This is because the \textit{effective} population of Patch 2 is less than  the \textit{effective} population of Patch 1 for all the three selected residence time configurations and also because $a_1>a_2$.
 \begin{figure}[ht!]
\begin{center}
\includegraphics[scale=0.5]{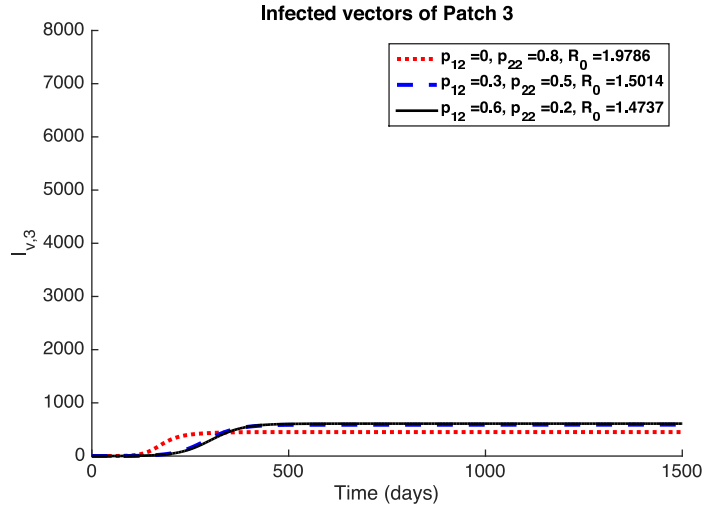}
\caption{Dynamics of vectors in Patch 3 ($I_{v,3}$) for different values of $p_{ij}$.}
\label{fig:Iv3Normal} 
\end{center}
\end{figure}

Figure  \ref{fig:Iv3Normal} represent the dynamics of the vectors in Patch 3. The number of infected vector in this patch is much less when compared to the number of infected in Patch 1 and 2. Again, the \textit{effective} population size of Patch 3, with $p_{13}=0.1$ and $p_{23}=0.2$, is much less when compared to those in Patch 1 and Patch 2.  Additionally, we also have by assumption that $a_3<a_2<a_1$.\\

For the vectors' prevalence, we obtain similar results as in Fig (\ref{fig:twofigsIv1Iv2})  and Fig (\ref{fig:Iv3Normal}) even if when biting rates are equal in all the three patches. This last comment  highlights the role of the \textit{effective} population per patch.
\begin{remark}
For all the selected combination of the residence times matrix entries in Fig \ref{fig:twofigsIh}-\ref{fig:twofigsIv1Iv2}-\ref{fig:Iv3Normal}, the matrices $M_{vh}M_{hv}$ and $M_{hv}M_{vh}$ are irreducible and $\mathcal R_0>1$ and so the curves of the infected hosts and vectors in those figures are reaching an endemic equilibrium level in accordance with Theorem \ref{TheoEEGAS}.
\end{remark}

\begin{figure}[ht]
\centering
 \subfigure[Dynamics of infected hosts of Group 1 and Group 2.]{
   \includegraphics[scale =.40]{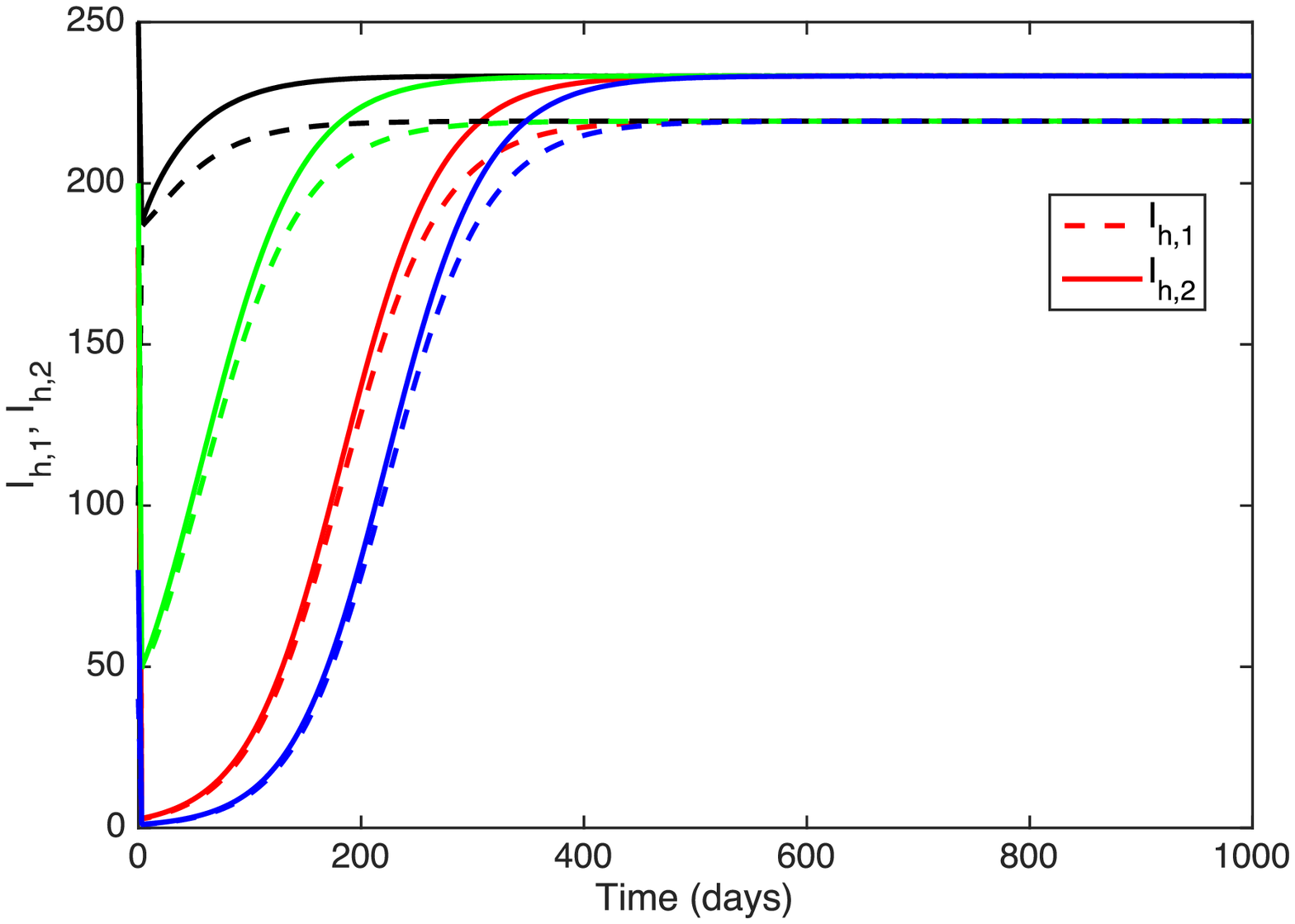}
\label{GASEEIh}}
\hspace{1mm}
 \subfigure[Dynamics of infected vectors of Patch 1 and Patch 2.]{
   \includegraphics[scale =.40]{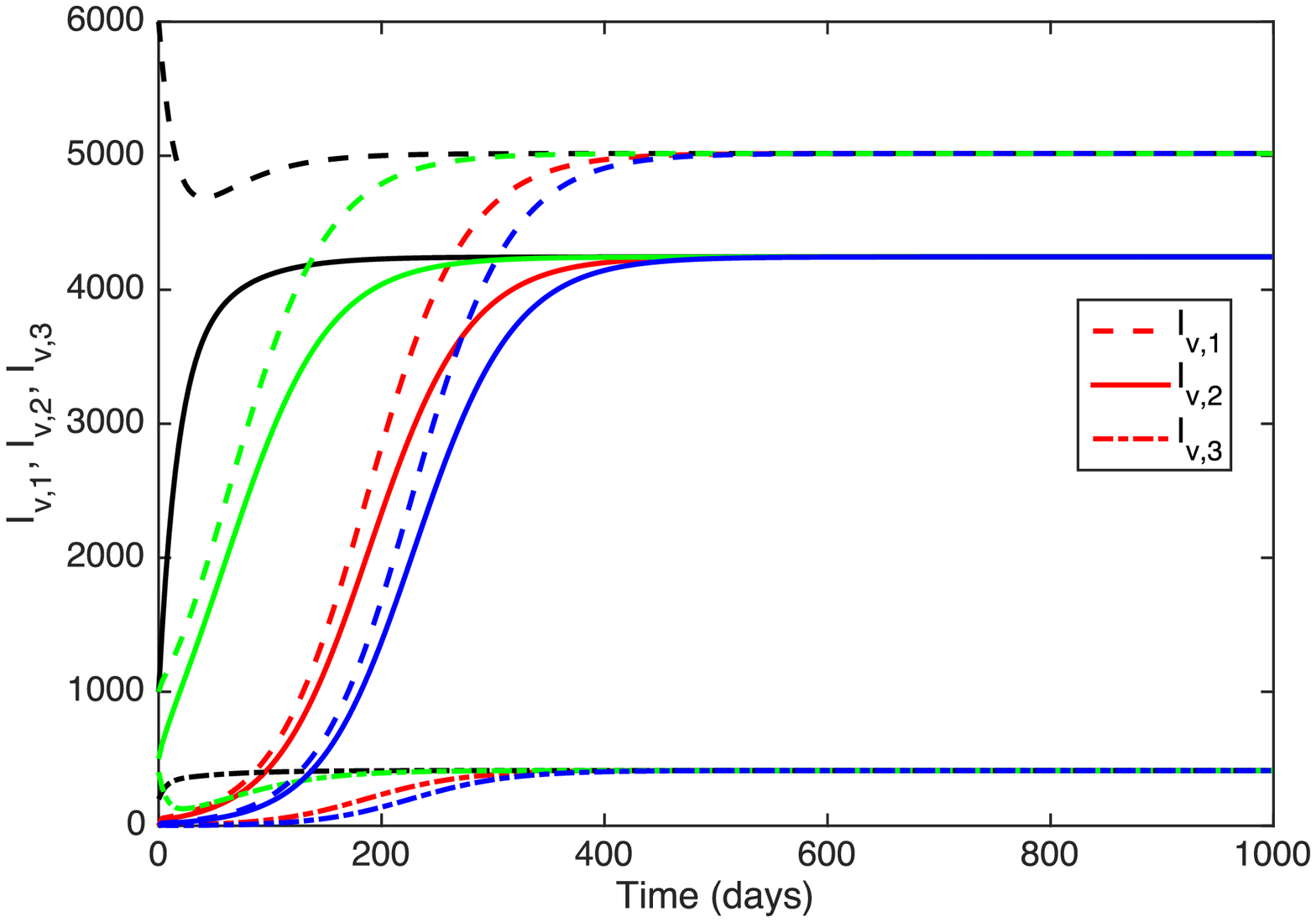}
\label{GASEEIv}}
\caption{Trajectories of System (\ref{PatchGenF}), with $n=2$ groups and $m=3$ patches with 4 different initial conditions. The trajectories are converging toward a unique interior endemic equilibrium.} \label{fig:twofigsGASEE}
\end{figure}
In Figure \ref{fig:twofigsGASEE} and \ref{fig:twofigsGASDFE}, we consider the case where the residence time matrix is fixed as follows:
$$\mathbb P=\left(\begin{array}{ccc}
0.4 & 0.3 & 0.3\\ 0.4 & 0.4 & 0.2
  \end{array}\right)$$
  In that case, the matrices $M_{hv}M_{vh}$ and $M_{vh}M_{hv}$ are both irreducible. We sketch the trajectories of System (\ref{PatchGenF}) for different initial conditions. Since, with these values of parameters and residence times matrix, the basic reproduction number is $\mathcal R_0(3,2)=1.4771$,  and so the result of Theorem  \ref{TheoEEGAS} should hold and  Figure \ref{fig:twofigsGASEE} confirms that. More precisely,  Figure \ref{GASEEIh} shows that the trajectories of infected individuals of Group 1 and Group 2 converge to the same interior endemic equilibrium for four different initial conditions, namely $IC_1=[I_{h,1}(0)=180,\; I_{h,2}(0)=180,\; I_{v,1}(0)=0,\; I_{v,2}(0)=0,\; I_{v,3}(0)=0]$ (solid red for $I_{h,1}$ and dashed red for $I_{h,2}$ in Fig \ref{GASEEIh}.), $IC_2=[I_{h,1}(0)=100,\; I_{h,2}(0)=250,\; I_{v,1}(0)=6000,\; I_{v,2}(0)=1000,\; I_{v,3}(0)=200]$ (solid black for $I_{h,1}$ and dotted black for $I_{h,2}$ in \ref{GASEEIh}.), $IC_3=[I_{h,1}(0)=80,\; I_{h,2}(0)=200,\; I_{v,1}(0)=1000,\; I_{v,2}(0)=500,\; I_{v,3}(0)=400]$ (solid green for $I_{h,1}$ and dotted green for $I_{h,2}$ in \ref{GASEEIh}.)  and $IC_4=[I_{h,1}(0)=40,\; I_{h,2}(0)=80,\; I_{v,1}(0)=1,\; I_{v,2}(0)=2,\; I_{v,3}(0)=3]$ (solid blue for $I_{h,1}$ and dotted blue for $I_{h,2}$ in \ref{GASEEIh}.).\\
  Similarly, Figure \ref{GASEEIv} displays the trajectories of infected vectors in the three considered environments. For all the above-mentioned initial conditions, these trajectories converge to their interior endemic equilibrium level. 
  

\begin{figure}[ht]
\centering
 \subfigure[Dynamics of infected hosts of Group 1 and Group 2.]{
   \includegraphics[scale =.40]{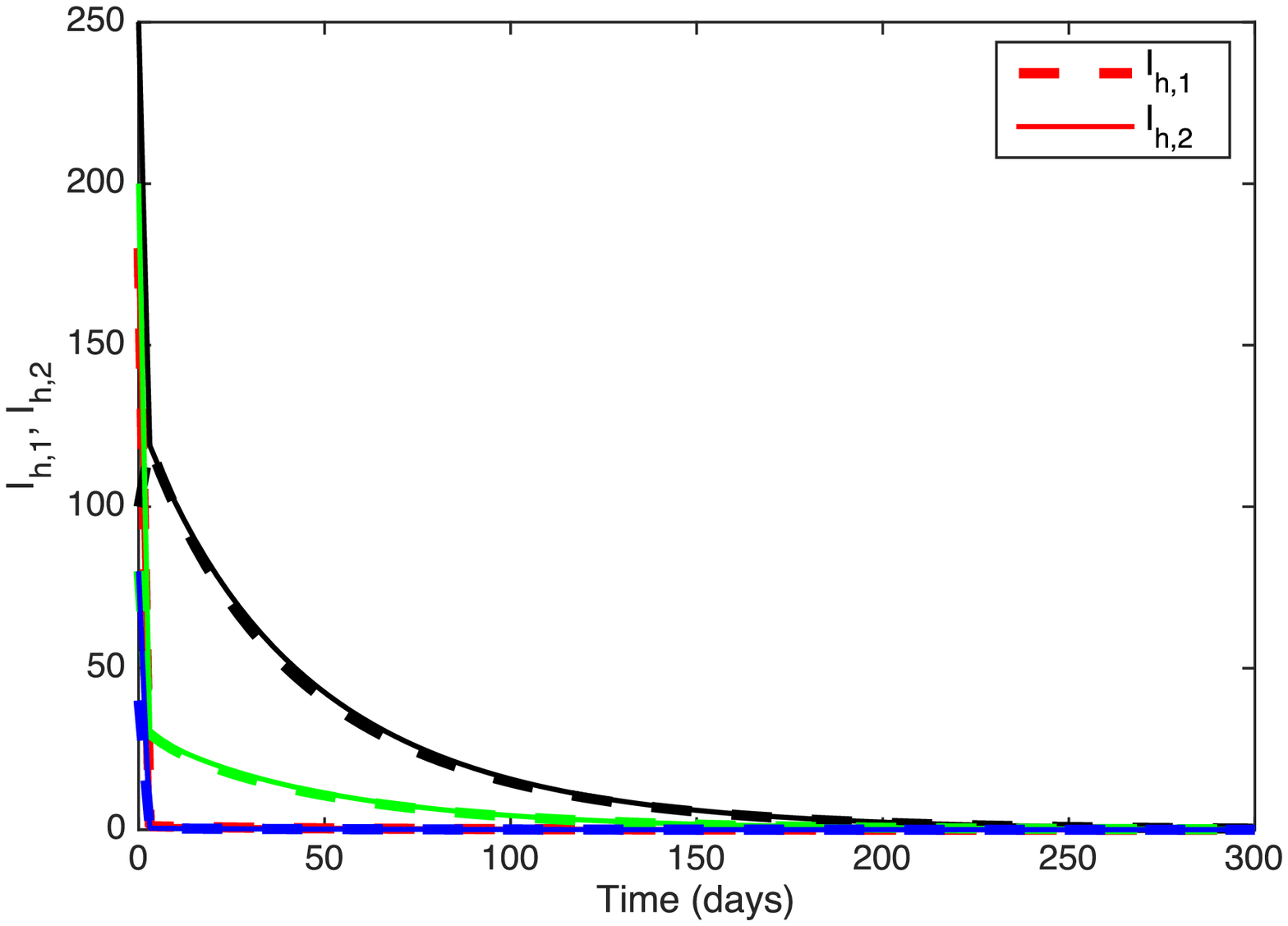}
\label{IhGASDFE}}
\hspace{1mm}
 \subfigure[Dynamics of infected vectors of Patch 1 and Patch 2.]{
   \includegraphics[scale =.40]{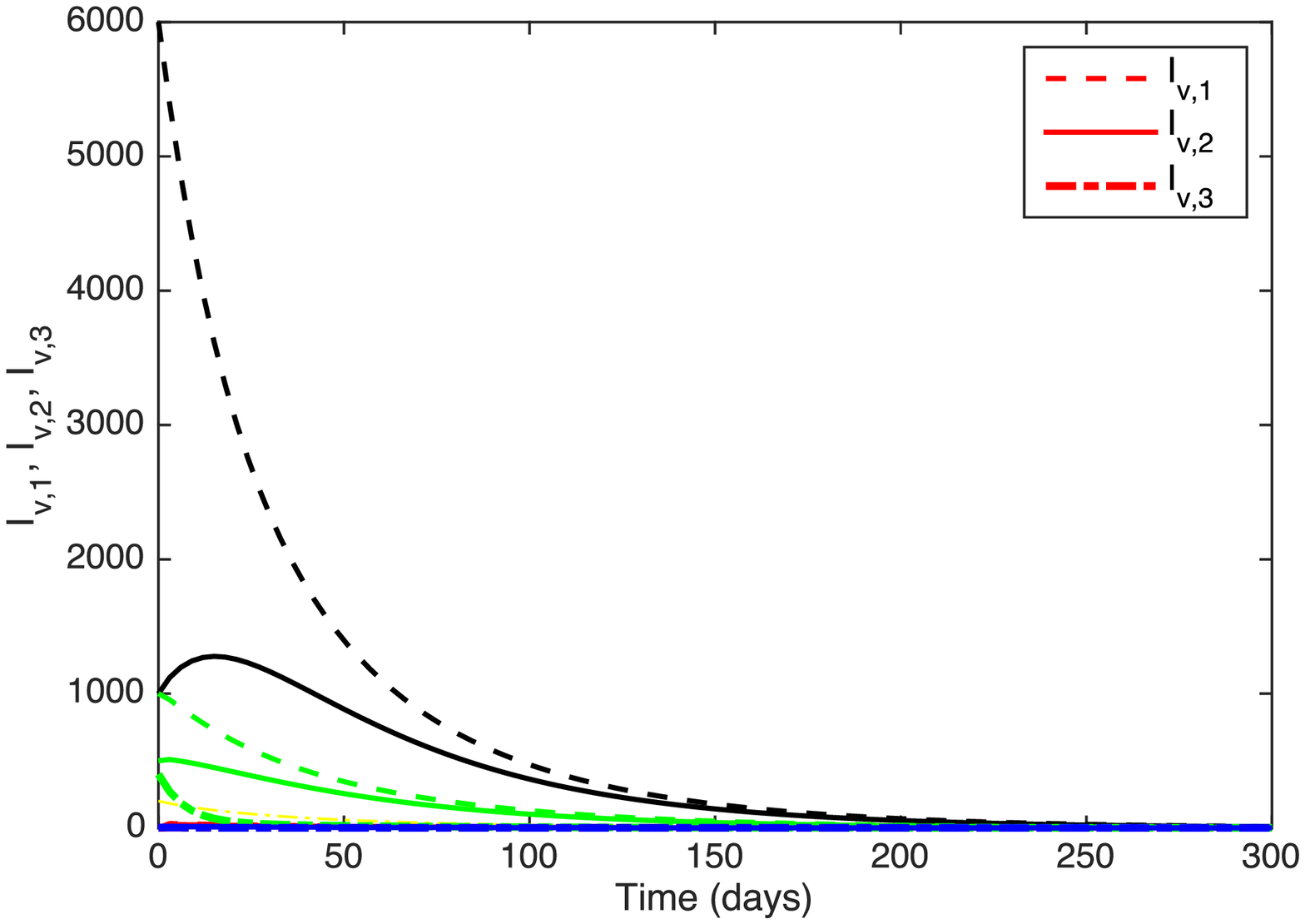}
\label{IvGASDFE}}
\caption{Trajectories of System (\ref{PatchGenF}), with $n=2$ groups and $m=3$ Patches with 4 different initial conditions. With $\beta_{hv}=0.2$ and $\beta_{vh}=0.4$, we have have $\mathcal R_0(3,2)=0.6353$ and the trajectories are converging toward the disease free equilibrium.} \label{fig:twofigsGASDFE}
\end{figure}

Figure \ref{fig:twofigsGASDFE} sketches the case where the values of all the parameters are the same as above but where $\beta_{hv}=0.2$ and $\beta_{vh}=0.4$. In this case, the basic reproduction number is $\mathcal R_0=0.6353$ which is less than one. As we can see in Fig \ref{IhGASDFE} and \ref{IvGASDFE}, the trajectories of infected hosts of the two groups and the infected vectors of the three patches are converging to zero for the above four initial conditions. This suggests that the DFE is globally asymptotically stable and confirms the result of Theorem \ref{TheoDFEGAS}.

Now, we consider the case where the configuration of the Group-Patch network is not irreducible. If we assume that $p_{12}=p_{21}=p_{23}=0$, the residence times matrix becomes
$$\mathbb P=\left(\begin{array}{ccc}
0.7 & 0& 0.3\\ 0 & 1 & 0
  \end{array}\right).$$
  This imply that $M_{hv}=\left(\begin{array}{cc}
\frac{a_1\beta_{vh}\bar N_{v,1}}{N_{h,1}(\mu_1+\gamma_1)} & 0\\
 0 & \frac{a_2\beta_{vh}\bar N_{v,2}}{N_{h,2}(\mu_2+\gamma_2)}\\ \frac{a_3\beta_{vh}\bar N_{v,3}}{N_{h,1}(\mu_1+\gamma_1)} & 0
 \end{array}\right)$ and $M_{vh}=\left(\begin{array}{ccc}
\frac{a_1\beta_{hv}}{\mu_v+\delta_1} & 0 & \frac{a_3\beta_{hv}}{\mu_v+\delta_3} \\
 0 & \frac{a_2\beta_{hv}}{\mu_v+\delta_2} & 0
 \end{array}\right)$. Hence, the matrices $M_{vh}M_{hv}$ and $M_{hv}M_{vh}$ are not both irreducible and hence Theorem  \ref{TheoEEGAS} does not hold, as shown in Fig \ref{fig:twofigsIReducible}, where a boundary equilibrium appears. In this case, members of Group 2 spend all their time in Patch 2 and hence are isolated from the rest of groups and patches. The basic reproduction number of Group 2 in Patch 2 is $(\mathcal R_0(2,2))^2= \frac{a_2^2\beta_{vh}\beta_{hv}N_{v,2}}{(\mu_v+\delta_2)(\mu_2+\gamma_2)N_{h,2}}=0.8$. The diseases dies out from the hosts of Group 2 (see \ref{IhReducible}, solid curves) and vectors of Patch 2 (\ref{IvReducible}, solid curves). Members of Group 1  are connected to Patch 1 and Patch 3. Hence, the corresponding basic reproduction number is $(\mathcal R_0^{1,{1,3}})^2= \frac{a_1^2\beta_{vh}\beta_{hv}N_{v,21}}{(\mu_v+\delta_1)(\mu_1+\gamma_1)N_{h,1}}+\frac{a_3^2\beta_{vh}\beta_{hv}N_{v,3}}{(\mu_v+\delta_3)(\mu_1+\gamma_1)N_{h,1}}=1.8549$. Hence, the disease persists among the members of Group 1(see \ref{IhReducible}, dashed curves) and vectors of Patch 1 and 3 (see \ref{IvReducible}, dashed curves). This case offers a glimpse on how disease dynamics when some groups are strongly connected to some environments while other groups are isolated. 
 \begin{figure}[ht]
\centering
 \subfigure[Dynamics of infected hosts of Group 1 and Group 2.]{
   \includegraphics[scale =.40]{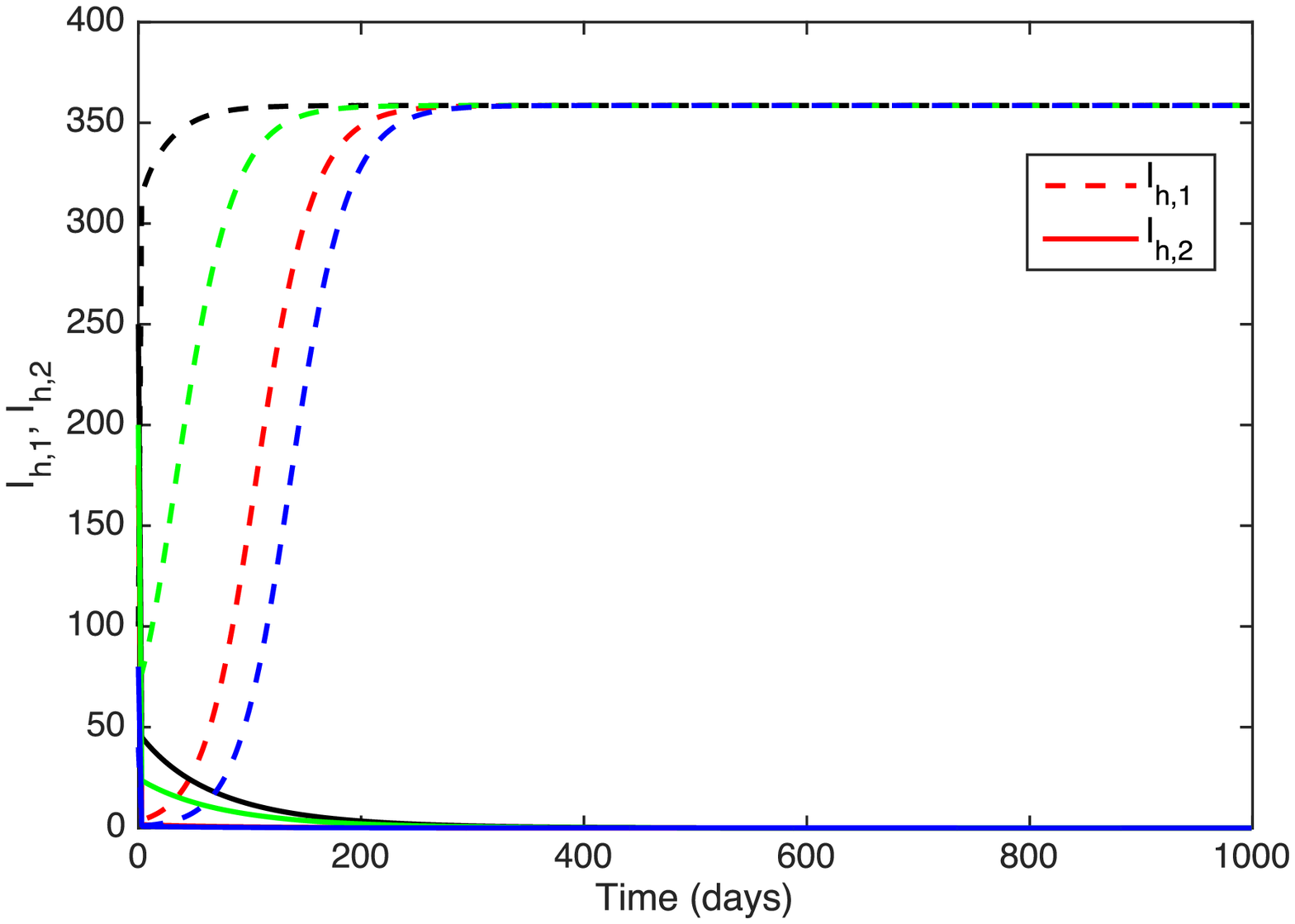}
\label{IhReducible}}
\hspace{1mm}
 \subfigure[Dynamics of infected vectors of Patch 1, Patch 2 and Patch 3.]{
   \includegraphics[scale =.40]{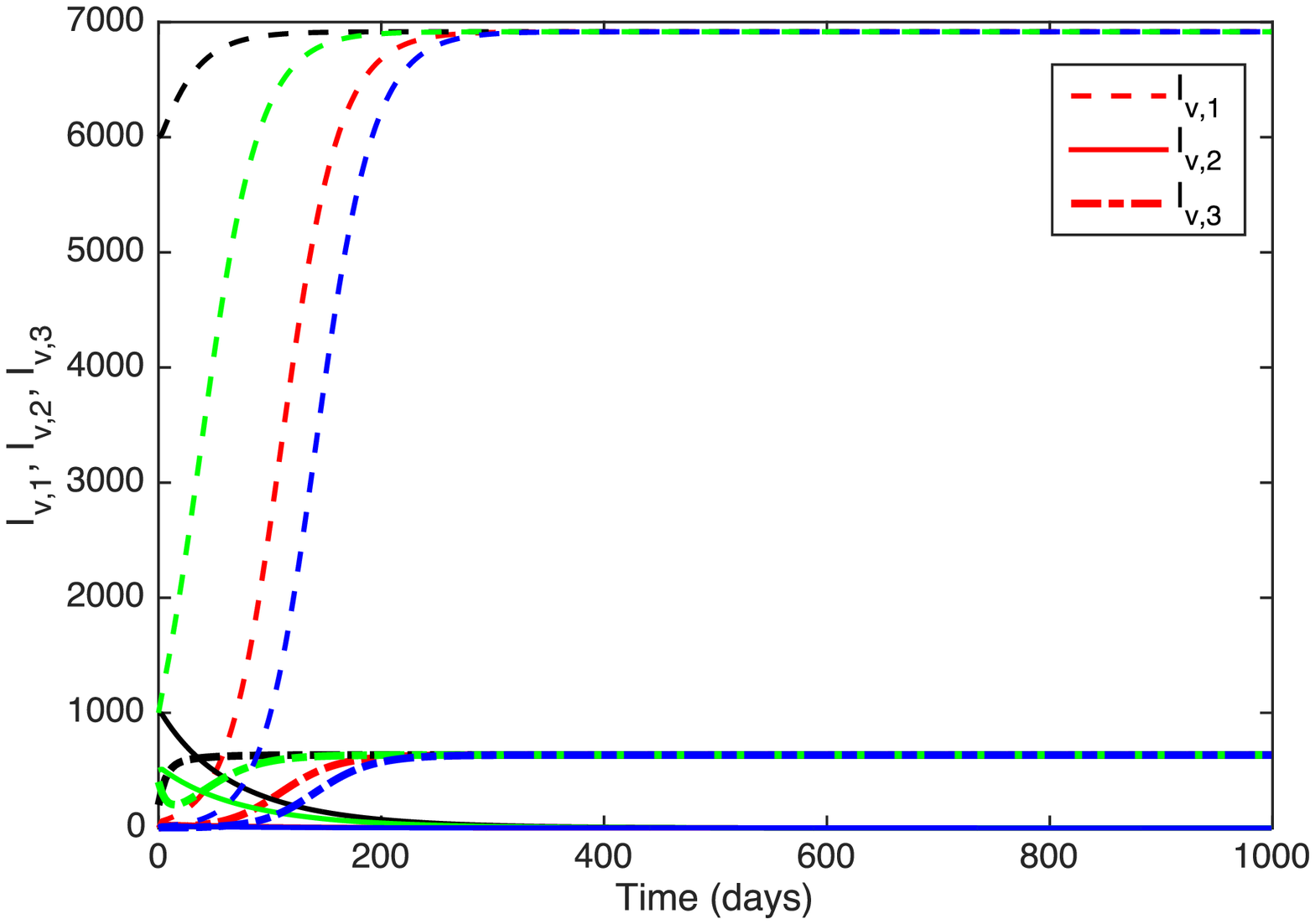}
\label{IvReducible}}
\caption{Trajectories of System (\ref{PatchGenF}), with $n=2$ groups and $m=3$ patches with 4 different initial conditions. The disease dies out for the host of Group 2 whereas it persists for those of Group 1. Similarly, the disease dies out for the vector of Patch 2 but persists for the vectors of Patch 1 and 3.} \label{fig:twofigsIReducible}
\end{figure}

\section{Conclusion and discussion}\label{sec:Conclusion}

Modeling vector-borne interactions have often been based  on well-mixed models that make it difficult to address effectively the role of host mobility on vector borne disease dynamics. Here, we consider a Lagrangian framework where  hosts' dispersal is modeled via the proportion of time that individuals  spend in different environments. In the process, we are forced to account, for time variations in \textit{effective} population size within each patch/environment. The kind of natural adjustment that can significantly alter the quantitative and qualitative dynamics of vector borne dynamics in geographically heterogeneous system; here within spatial scales that make it possible to neglect vector mobility.

And so, we consider  a general  $SIS$  framework to account for the host dynamics and an $SI$ framework to account for the vector dynamics. The transmission terms must make adjustments to account for the \textit{effective} population size generated by the residence time matrix. This is handled via the use of a modified  frequency-dependent incidence model that accounts for the \textit{effective} density of infected hosts within each patch at any time. We compute the basic reproduction number  $\mathcal R_0^2(\mathbb P,m,n)$ for the general host-vector model and prove that  the disease free equilibrium is globally asymptotically stable (GAS) if $\mathcal R_0^2(\mathbb P,m,n)\leq1$. We also show that there exists a unique interior endemic equilibrium that is GAS whenever  $\mathcal R_0^2(\mathbb P,m,n)>1$ in the irreducible case, that is, when the hosts' groups and vector patches are strongly connected. When irreducibility does not hold,   the existence of boundary equilibria is identified. In addition, we provide explicit expression for the basic reproduction number whenever the residence time matrix $\mathbb P$ is of rank one. Finally, we briefly explore  the role of variability in the number of patches and groups on the basic reproduction number, $\mathcal R_0^2(\mathbb P,m,n)$ and in the process, bounds for $\mathcal R_0^2(\mathbb P,m,n)$ are identified.

Our results generalize those of   \cite{cosner2009effects,lee2015role,rodriguez2001models} since our models account for the time-dependant \textit{effective} patch population size.
The approach we considered here includes the case where the hosts' structure is defined by residency ( see Subsection \ref{Subsection2P2G}) as well as the case when the hosts' structure is defined by groups or classes that are independent from the spatially explicit patches.

In short, the contributions of this manuscript are primarily tied to the \textit{effective} population size, a function of the mobility matrix $\mathbb P=(p_{ij})_{\substack{
      1 \leq i \leq n, \\
      1 \leq j \leq m}}$, where the $p_{ij}$ denotes the proportion of time the host of group $i$ (or  the member of a well defined class $i$) spends in environment $j$. We explicitly study the role of the matrix $\mathbb P$ on $\mathcal R_0$ and connected the dynamics to the reducibility and irreducibility structure of the system. Theorems were established and examples provided on the role of \textit{pachiness} and  \textit{groupness} on the disease dynamics. 
      
      \subsection*{Acknowledgements}
{\small{We are grateful to the handling editor and two anonymous reviewers for their helpful comments and suggestions which led to an improvement of this paper.  C.C.C is supported in part grant \#1R01GM100471-01 from the National Institute of General Medical Sciences (NIGMS) at the National Institutes of Health. The contents of this manuscript are solely the responsibility of the authors and do not necessarily represent the official views of DHS or NIGMS. The funders had no role in study design, data collection and analysis, decision to publish, or preparation of the manuscript.}}


\end{document}